\definecolor{navyblue}{rgb}{0.0, 0.0, 0.5}
\newtheorem{ex*}{Example}
\newtheorem{claim}{Claim}
\newtheorem{claim*}{Claim}
\begin{document}

\title{Geometric Delocalization in Two Dimensions}

\author{Laura Shou}
    \affiliation{Joint Quantum Institute, Department of Physics, University of Maryland, College Park 20742}
    
\author{Alireza Parhizkar}
    \affiliation{Joint Quantum Institute, Department of Physics, University of Maryland, College Park 20742}

\author{Victor Galitski}
    \affiliation{Joint Quantum Institute, Department of Physics, University of Maryland, College Park 20742}

\begin{abstract}
We demonstrate the existence of transient two-dimensional surfaces where a random-walking particle escapes to infinity in contrast to localization in standard flat 2D space.  We first prove that any rotationally symmetric 2D membrane embedded in flat 3D space cannot be transient. Then we formulate a criterion for the transience of a general asymmetric 2D membrane. We use it to explicitly construct a class of transient 2D manifolds with a non-trivial metric and height function but ``zero average curvature,'' which we dub  tablecloth manifolds. The absence of the logarithmic infrared divergence of the Laplace--Beltrami operator in turn implies the absence of weak localization, non-existence of bound states in shallow potentials, and breakdown of the Mermin--Wagner theorem and Kosterlitz--Thouless transition on the tablecloth manifolds, which may be realizable in both quantum simulators and corrugated two-dimensional materials.
\end{abstract}

\maketitle

\textit{Introduction}---
There are many seemingly disconnected physical phenomena which are related to the properties of the Laplace--Beltrami operator. They include random walk or diffusion, the standard Schr{\"o}dinger equation, the properties of fluctuations in symmetry broken phases, interactions between topological excitations, and many others. In particular, if the  heat kernel $p(x,y;t)$ integrated over time --- the Green's function of the Laplace--Beltrami operator on a manifold --- is infinite it implies automatically the following properties of this space: a random-walking particle is guaranteed to return to its starting region infinitely often (recurrence) \cite{bull}, there is Anderson localization in an arbitrarily weak disorder potential \cite{AltshulerAranov1985,LeeRamakrishnan1985,CNG2025}, any shallow quantum potential well hosts a bound state \cite{ShallowPotential}, no long-range order with spontaneously broken continuous symmetry can exist at finite temperature in this space (Mermin--Wagner theorem) \cite{MerminWagner1966,Hohenberg1967,FriedliVelenik}, to name a few. Specifically in flat Euclidean space, the two-dimensional case is critical 
as the corresponding heat kernel integral diverges logarithmically:
\begin{align}
\label{log}
 \int\limits_{\tau_{\rm min}}^{\tau_{\rm max}} p(x,y;t)\,dt \propto \ln (\tau_{\rm min} /\tau_{\rm max}) \, ,
\end{align}
which means that the two-dimensional flat space is recurrent, while higher dimensional flat spaces, which have finite integrals, are transient (a random-walking particle always escapes its starting region). An additional physical phenomenon tied to the Green's function of the Laplace--Beltrami operator specific to $O(2)$ models on two-dimensional manifolds is the behavior of topological excitations there. In the conventional flat space, the logarithmic divergence \eqref{log} is tied to logarithmic vortex-vortex interactions that in turn lead to a finite-temperature Berezinskii--Kosterlitz--Thouless transition due to the competition with entropic effects which also scale logarithmically \cite{BKTGreenRef}. 

Due to the abundance of fundamental physical phenomena tied to the Laplace--Beltrami operator, it is reasonable to ask if there exist two-dimensional curved transient manifolds. 
Recent works \cite{CNG2025,chen2024anderson} have studied some of these phenomena on hyperbolic manifolds and lattices, which are known to be transient \cite{McKean1970,Prat1971,Kesten1959full,DerriennicGuivarch1973}.
Here we ask: is it possible to have a two-dimensional membrane (i.e., a two-dimensional smooth manifold embedded in flat Euclidean three-dimensional space described by a height function) that is transient, 
$\int_{\tau_\mathrm{min}}^\infty p(x,y;t)\,dt<\infty$?
We answer this question in the affirmative and explicitly construct 2D transient ``tablecloth manifolds'' (Fig.~\ref{fig:Manifolds}), which hence lead to the breakdown of standard two-dimensional physics. 

\begin{figure}[htb]
\includegraphics[width=\linewidth]{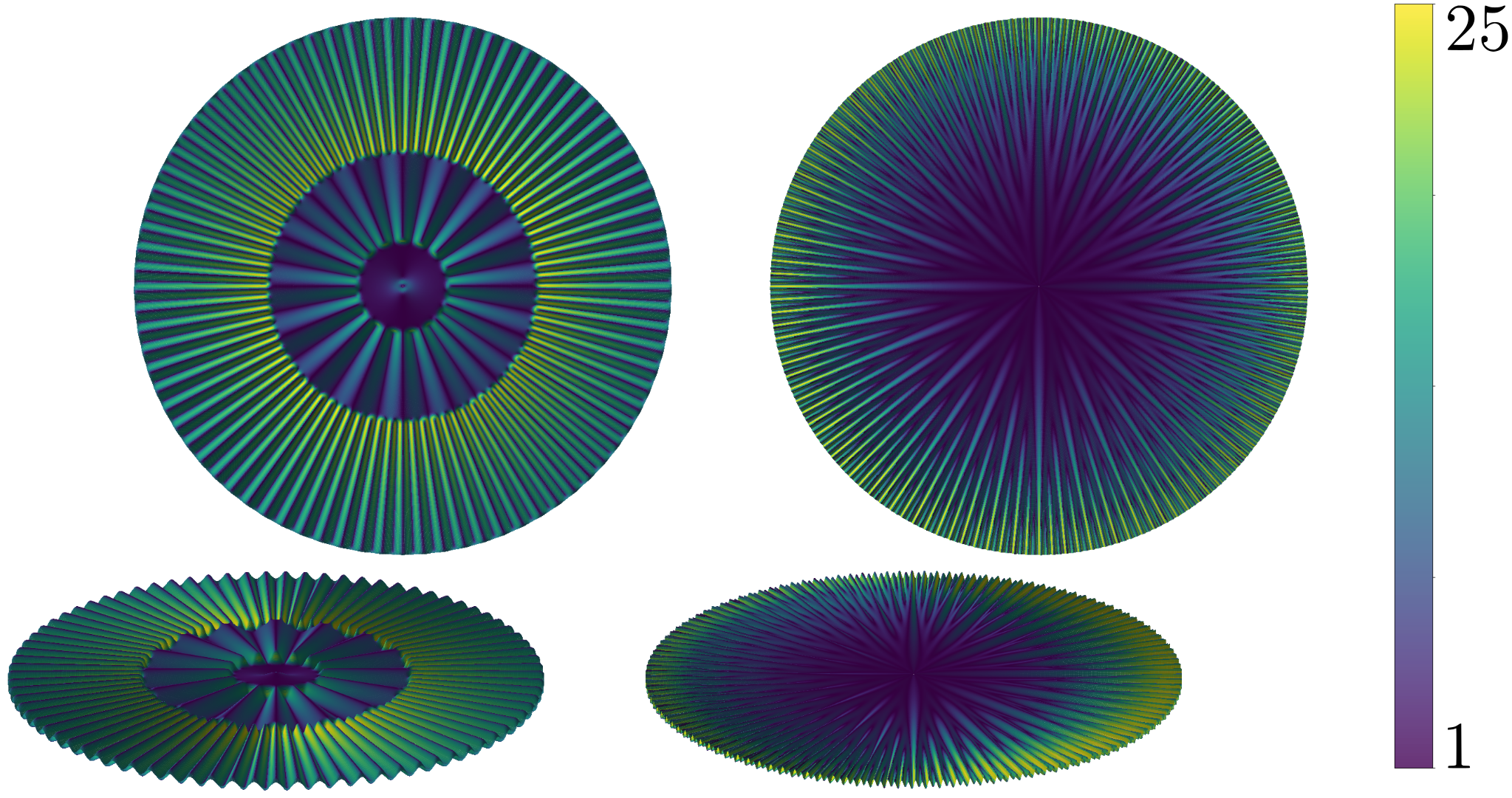}
\centering
\caption{Two examples of tablecloth manifolds. On the right, a generic tablecloth manifold, and on the left the simplified example constructed in Eq.~\eqref{eqn:h}. The colors demonstrate the ratio of the volume element at any point on the manifold, $\sqrt{g(r,\theta)}drd\theta$, to the regular flat volume element at the same point, $r dr d\theta$. Therefore, the colors encode the value of $\sqrt{g(r,\theta)}/r$. This allows us to compare the volume growth to the regular flat one as we go away from the origin with increasing $r$. Purple corresponds to the regular $\pi r^2$ volume growth of a flat disk, while other colors designate faster growths.}
\label{fig:Manifolds}
\end{figure}

\textit{Transience on a 2-dimensional membrane}---
The general form of a two dimensional Riemannian metric in polar-like coordinates, with no isometries assumed, is given by,
\begin{equation}\label{eqn:metric-polar-main}
    ds^2 = A^2(r,\theta)\, dr^2 + 2B(r,\theta) \,dr\, d\theta + C^2(r,\theta) \,d\theta^2 \, .
\end{equation}
For rotationally symmetric manifolds with metric of the form $ds^2=dr^2+f(r)^2\,d\theta^2$, a well-known result \cite{Ahlfors1935,Milnor1977} is that transience is equivalent to the condition
\begin{align}\label{eqn:rot-transience}
\int_1^\infty\frac{1}{f(r)}\,dr<\infty \, .
\end{align}
Using this, we can find rotationally symmetric metrics $ds^2=dr^2+f(r)^2\,d\theta^2$ corresponding to transient Brownian motion, and which also satisfy an average zero curvature condition, as described in the Supplemental Material \cite{sm}. However it will turn out none of these examples can represent a two-dimensional membrane in standard flat space.
While such metrics can be engineered in circuit-based platforms~\cite{kollar2019hyperbolic,boettcher2020quantum,blais2021circuit,lenggenhager2022simulating}, our objective here is to materialize the geometry directly as a curved physical surface embedded in flat three-dimensional space. This approach enables realization not only in synthetic architectures but also in naturally occurring systems involving corrugated membranes~\cite{BowickGiomi2d}.
As we will show, this embedding will necessitate breaking rotational symmetry.

To construct a transient surface described by a height function embedded in $\mathbf{R}^3$, we use the cylindrical coordinates, $x^\mu \equiv (r,\theta,z)$, on the three dimensional ambient manifold and polar-like coordinates, $x^a \equiv (r,\theta)$, on the embedded membrane where we have identified the $r$ and $\theta$ coordinates of both manifolds. This way $\xi^\mu \equiv (r,\theta,h(r,\theta))$ is the vector that probes the membrane with $h(r,\theta)$ being its height function.

The induced metric on the membrane is then given by,
\begin{equation}
    g_{ab} = \frac{\partial \xi^\mu}{\partial x^a} \frac{\partial \xi^\nu}{\partial x^b} \eta_{\mu\nu} \, ,
\end{equation}
with the Einstein summation rule presumed and $\eta_{\mu\nu}$ being the flat metric of the ambient space with cylindrical coordinates,
\begin{equation}
    \eta_{\mu\nu} = 
        \begin{bmatrix}
            1 & 0 & 0\\
            0 & r^2 & 0 \\
            0 & 0 & 1
        \end{bmatrix}	\, .
\end{equation}
By noticing that $\partial r / \partial r = \partial \theta /\partial \theta = 1$ and $\partial r/ \partial \theta = 0$, we calculate the induced metric to be,
\begin{equation} \label{eqn:gI}
    g_{ab} = 
        \begin{bmatrix}
            1+ \left(\frac{\partial h}{\partial r}\right)^2 & \frac{\partial h}{\partial r} \frac{\partial h}{\partial \theta} \\
            \frac{\partial h}{\partial \theta} \frac{\partial h}{\partial r} & r^2 +  \left(\frac{\partial h}{\partial \theta}\right)^2
        \end{bmatrix}	\, ,
\end{equation}
with the corresponding scalar curvature given by,
\begin{equation}
    R = \frac{2}{r}\frac{ \left(\frac{1}{r}\partial^2_\theta h +  \partial_r h\right) \partial^2_r h -  \frac{1}{r} \left(\frac{1}{r}\partial_\theta h -  \partial_r \partial_\theta h\right)^2}{ \left[1  +  (\partial_r h)^2 + (\frac{1}{r}\partial_\theta h)^2\right]^2} \, .
\end{equation}
If we demand rotational symmetry around the origin, then the induced metric turns into
\begin{equation}\label{eqn:gI-rot}
    g_{ab} = 
        \begin{bmatrix}
            1+ \left(\frac{\partial h}{\partial r}\right)^2 & 0 \\
            0 & r^2
        \end{bmatrix}	\, ,
\end{equation}
which is of the form, $ds^2 = A^2(r) dr^2 + r^2 d\theta^2$, with $A(r)=\sqrt{1+ h'(r)^2}$.~\footnote{The non-zero Christoffel symbols, $\Gamma^a_{bc} \equiv \frac{1}{2}g^{ad}( \partial_b g_{cd} + \partial_c g_{bd} - \partial_d g_{bc})$, for this metric are:$    \Gamma^r_{\theta\theta} = \frac{r}{A^2(r)} \, , \quad \Gamma^\theta_{r\theta} = \frac{1}{r} \quad  \text{and} \quad \Gamma^r_{rr} = \frac{A'(r)}{A(r)} \, ,$
which leads to the only independent component of the Riemann tensor, $R^r_{\ \theta r\theta}= r A'/A^3$, and the scalar curvature,
\begin{equation}
    R = 2g^{\theta\theta} R^r_{\ \theta r\theta} =  \frac{2}{r}\frac{A'(r)}{ A^3(r)} \, .
\end{equation}
In two dimensions, this Ricci scalar is twice the Gaussian curvature. Hence, for a spherically curved membrane with $h(r)=\sqrt{1 - r^2}$, $R=2$, while the Gaussian curvature is the inverse product of the principal radii of the membrane, which here is equal to one everywhere.}

However, having the line element as $ds^2 = A^2(r) dr^2 + r^2 d\theta^2$ means that the physical distance between two neighboring points that are radially apart is $dr_{\text{ph}} \equiv A(r) dr$. So in terms of the physical distance we have $ds^2 = dr_{\text{ph}}^2 + r(r_{\text{ph}})^2 d\theta^2$. Since our metric is now in the form required to apply Eq.~\eqref{eqn:rot-transience}, we see the transience condition becomes,
\begin{equation}
    \int_1^\infty \frac{dr_{\text{ph}}}{r(r_{\text{ph}})} <\infty \, .
\end{equation}
But this is only possible if $A(r)<1$ since otherwise $dr_\text{ph} > dr$ (or $ r_\text{ph} =\int^r A(r')dr' \ge r$) and thus $\int_1^\infty dr_{\text{ph}}/r(r_{\text{ph}}) \ge \int_1^\infty dr/r = \infty$. 
But here $A(r)=\sqrt{1 + h'(r)^2}>1$. This means in order to have transience on a membrane we need to break rotational symmetry. Transience on a membrane is unachievable with rotational symmetry.

Now considering non-rotationally symmetric manifolds, a useful condition for transience was given in Ref.~\cite{Doyle1988}. Using a heuristic argument involving conductance (which can be made rigorous, see e.g. Ref.~\cite[\S12]{bull}), Ref.~\cite{Doyle1988} showed that for a two dimensional manifold with a metric given by $ds^2=dr^2+f(r,\theta)^2\,d\theta^2$, a sufficient condition for transience is
\begin{align}\label{eqn:geo-transience}
\mathrm{meas}\left\{\theta\in[0,2\pi):\int_1^\infty\frac{dr}{{f(r,\theta)}}<\infty\right\}>0 \, ,
\end{align}
where $\mathrm{meas}$ means the Lebesgue measure. 
Intuitively, this means the manifold only needs to look transient on e.g. a small wedge of $\theta$ in order to have the Brownian motion escape. 
The argument given in Ref.~\cite{Doyle1988} is that the resistance of a piece is proportional to its length divided by its cross-sectional area, and so the resistance along a narrow strip of arc width $\delta\theta$ out to infinity is given by integrating $1/(\sqrt{g(r,\theta)}\delta\theta)=1/(f(r,\theta)\delta\theta)$ with respect to $r$, with $g(r,\theta) \equiv \det(g_{ab})$ being the determinant of the metric. 
By cutting the manifold into many small strips, each going from a center ring to infinity, the conductances of each strip, which are the reciprocals of the resistances, add to give the conductance of the whole system. As long as this is nonzero, this implies a current flow out to infinity, which causes a Brownian motion to escape.

In principle, we can transform a metric of the form Eq.~\eqref{eqn:metric-polar-main}
into the form $ds^2=dr^2+f(r,\theta)^2\,d\theta^2$, and then the condition Eq.~\eqref{eqn:geo-transience} can be applied.
However, in practice, the required change of coordinates can be very difficult and not analytically tractable. 
Therefore in order to prove transience for a membrane case described by Eq.~\eqref{eqn:gI}, we need to derive the analogue of Eq.~\eqref{eqn:geo-transience} for a metric of the general form in Eq.~\eqref{eqn:metric-polar-main}.
We first note that the analogue cannot simply be integrating $\int_1^\infty dr/\sqrt{ g(r,\theta)}$ and checking if it is finite for enough $\theta$.
For instance, while ${\sqrt{ g(r,\theta)}} dr d\theta$ gives the volume element at $(r,\theta)$ (as $f(r,\theta)drd\theta$ did in the metric $dr^2+f(r,\theta)^2\,d\theta^2$), the integration done to calculate the resistance should be done with respect to the \emph{physical} distance on the manifold, which is not necessarily the coordinate distance $r$ in Eq.~\eqref{eqn:metric-polar-main}.
As an example, consider the paraboloid given by $h(r,\theta)=\frac{1}{2}r^2$, which has positive curvature and recurrent Brownian motion. Using Eq.~\eqref{eqn:gI}, the induced metric is $ds^2=(1+r^2)\,dr^2+r^2\,d\theta^2$, which has  $\sqrt{ g(r)}=r\sqrt{1+r^2}\sim r^2$ for large $r$. Therefore $\int_1^\infty dr/\sqrt{ g(r)}<\infty$, but the problem is that the physical distance of a point $(r,\theta)$ on the manifold is much larger than the coordinate $r$, so that the integral over $dr$ does not represent anything meaningful.
Instead, we will show:
\begin{claim}\label{claim}
For a manifold with smooth metric $g_{ab}=\begin{bmatrix}A^2(r,\theta)&B(r,\theta)\\B(r,\theta)&C^2(r,\theta)\end{bmatrix}$ in polar-like coordinates with $g(r,\theta)>0$ for all $r>0$, the condition
\begin{align}\label{eqn:transience-cond}
\operatorname{meas}\left\{\theta\in[0,2\pi):\int_1^\infty\frac{A^2(r,\theta)}{\sqrt{g(r,\theta)}}\,dr<\infty\right\}>0
\end{align}
implies transience.
\end{claim}
We can explain this heuristically, similarly to the argument given in Ref.~\cite{Doyle1988} for Eq.~\eqref{eqn:geo-transience}. 
As described in Fig~\ref{fig:wedge}, the resistance of a thin wedge of angle $\delta\theta$ is proportional to $\int_{r_0}^\infty dr A^2(r,\theta)/\sqrt{g(r,\theta)}\,\delta\theta$.
To determine the conductance of a ball of radius $r_0$ out to infinity, we sum the reciprocals of the resistance of each strip over $\delta\theta$. This gives the total conductance as $\int_0^{2\pi}\,d\theta \left(\int_{r_0}^\infty dr A^2(r,\theta)/\sqrt{g(r,\theta)}\right)^{-1}$, which is nonzero exactly when Eq.~\eqref{eqn:transience-cond} is satisfied.
We also give the formal proof of Claim~\ref{claim} in the Supplemental Material \cite{sm} employing the concept of capacity.

\begin{figure}[htb]
% \begin{tikzpicture}[scale=1.3]
% \def\angle{1.5}
% \def\start{3}
% \def\gstartrel{1.75}
% \def\glength{.25}
% \node[left] at (\start,0) {$\delta\theta$};
% \draw (\angle:\start)--(\angle:{\start+4});
% \draw (-\angle:\start)--(-\angle:{\start+4});
% \draw (-\angle:{\start+\gstartrel+\glength}) arc (-\angle:\angle:{\start+\gstartrel+\glength});
% \draw (\angle:{\start+\gstartrel+\glength}) arc (\angle:\angle:{\start+\gstartrel+\glength});
% \draw[fill=gray] (-\angle:\start+\gstartrel)--(-\angle:{\start+\gstartrel+\glength})--(\angle:{\start+\gstartrel+\glength})--(\angle:\start+\gstartrel)--cycle;
% %
% \draw[decoration={brace,mirror,raise=5pt},decorate]
%   (-\angle:\start+\gstartrel) -- node[below=6pt] {$A(r,\theta)\delta r$} (-\angle:{\start+\gstartrel+\glength});
% \node[above right] at ({\start+\gstartrel},.3) {volume = $\sqrt{g(r,\theta)}\,\delta\theta\,\delta r$};
% \draw[->,xshift=.3cm] ({\start+\gstartrel},.4)--++(-.2,-.2);
% \end{tikzpicture}
\includegraphics{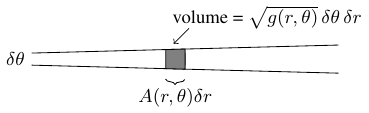}
\caption{Calculating the resistance along a thin strip with angle $\delta\theta$. The volume (which in two dimensions is the surface area) of the shaded region,$\sqrt{g(r,\theta)}\,\delta\theta\,\delta r$, is given by the product of its physical length $A(r,\theta)\delta r$ and its physical cross-section which therefore is given by $\sqrt{g(r,\theta)}\delta\theta \delta r/A(r,\theta)\delta r$. The resistance of the shaded region is proportional to its physical length divided by its cross-section.
Consequently, the resistance of the whole strip is proportional to $\int_{r_0}^\infty dr A^2(r,\theta)/\sqrt{g(r,\theta)}\,\delta\theta$.}\label{fig:wedge}
\end{figure}

Returning to the paraboloid example $h(r,\theta)=\frac{1}{2}r^2$, we see the integral in Eq.~\eqref{eqn:transience-cond} diverges (for all $\theta$), resolving the observations noted above. We also recover the fact that every rotationally invariant metric of the form in Eq.~\eqref{eqn:gI} must be recurrent, since
$\partial_\theta h=0$, $C(r)=r$, and $A(r)\ge1$, so the integral is $\int_1^\infty dr A(r)/C(r) \ge \int_1^\infty dr/r=\infty$.

Compared to the constant curvature hyperbolic spaces considered in Refs.~\cite{CNG2025,chen2024anderson}, 
Claim~\ref{claim} allows for non-homogeneous spaces with rapidly changing curvature and volume growth rates. This will be necessary for constructing a transient membrane in Euclidean 3D space and for obtaining an average 2D flatness, or zero curvature, property.

\textit{Tablecloth manifolds}---
A theorem of Ref.~\cite{ChengYau1975} states that any manifold with geodesic ball volume growth $O(r^2)$ will have recurrent Brownian motion. Therefore, for transience, the first obstacle to overcome is having sufficiently fast volume growth. 
We emphasize however that fast volume growth alone is \emph{not} sufficient for transience. This can be seen from e.g. Eq.~\eqref{eqn:rot-transience}, even for rotationally symmetric metrics, by mixing regions of slow and fast volume growth together in certain ways. Later in this section, we also give a membrane example with fast $r^5$ volume growth which fails to satisfy Eq.~\eqref{eqn:transience-cond}.
Worth mentioning is that the tablecloth membranes we construct are very non-homogeneous, involving regions of both fast and slow volume growth, and so the traditional scaling \cite{aalr} and $2+\varepsilon$ dimensionality \cite{wegner1989four} arguments for e.g. Anderson localization in homogeneous spaces, including for $\mathbf{R}^d$ and (bi)fractals \cite{schreiber1996dimensionality}, do not apply here.

In order to gain some intuition about the relation between volume growth and geometry consider the following examples.
On a unit sphere, as one gets away from a pole, the boundary of the ball centering the pole increases for a while and then shrinks as one passes through the equator. This means that the volume growth is slower than $c r^2$ for $0<r<2\pi$. The spherical metric is given by $ds^2 = dr^2 + \sin^2(r) d\theta^2$ and the boundary of the ball at $r=R$ by $\int_{r=R} ds = 2\pi\sin(R)$ which should be compared to $2\pi R$.
On hyperbolic geometry the reverse is true. Namely the boundary of the ball grows faster than the circumference of a regular circle.
The hyperbolic metric is given by $ds^2 = dr^2 + \sinh^2(r) d\theta^2$ and the boundary of the ball is given by $\int_{r=R} ds = 2\pi\sinh(R)$ which should be compared to $2\pi R$. But due to the same fact the maximally symmetric hyperbolic plane, described by the given metric, is not isometrically embeddable in the three dimensional flat space as a membrane. However, for fast volume growth with respect to $r$ we demand the boundary of the ball $\int_{r=R} ds$ to increase faster that the regular circle. Therefore, there is no other way for the membrane but to wrinkle up. This is again why we need to drop rotational symmetry. Fig.~\ref{fig:Tablecloth} demonstrates this concept.

\begin{figure}[htb]
\includegraphics[width=\linewidth]{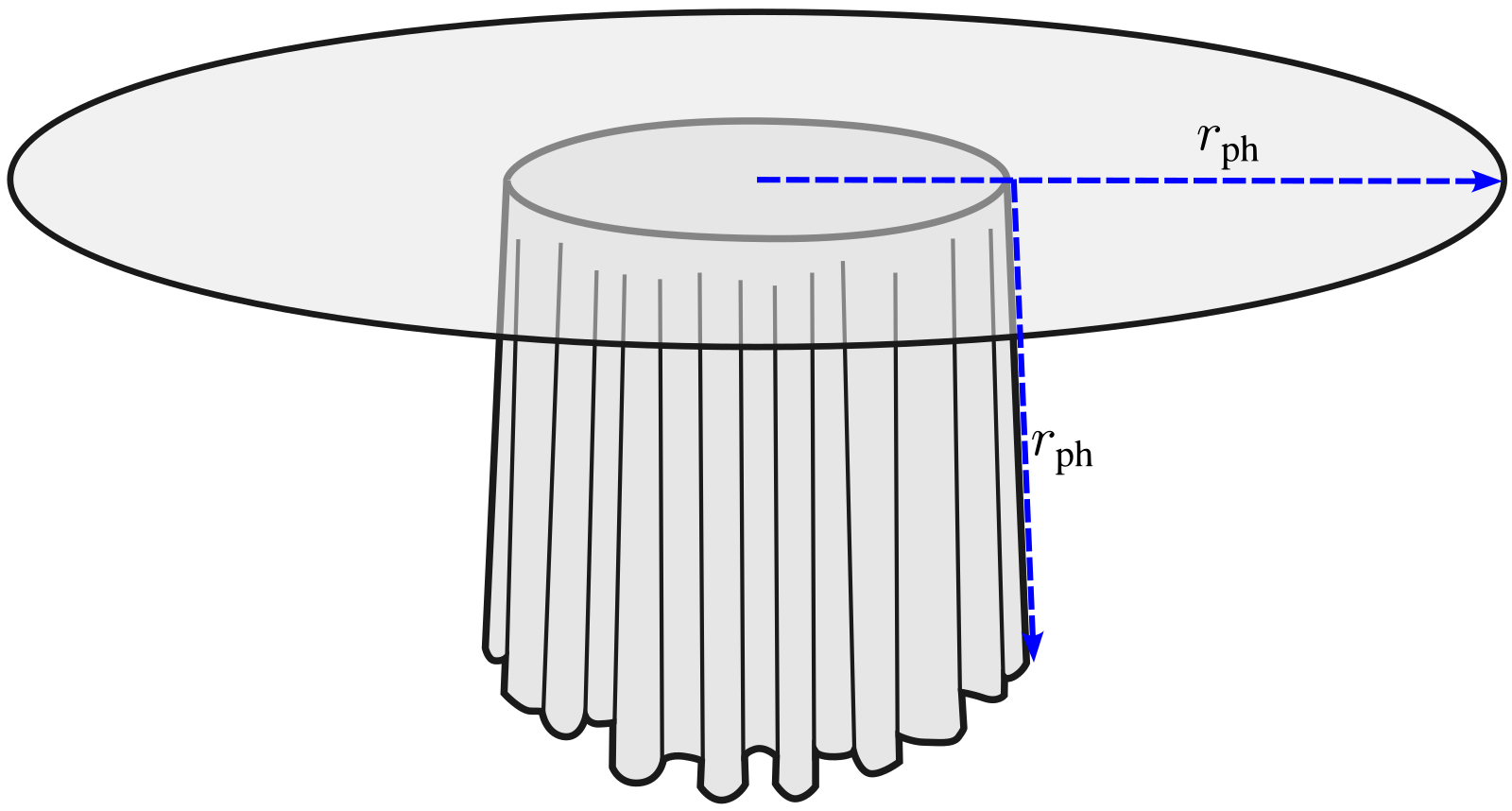}
\centering
\caption{A disk shaped tablecloth has no wrinkles when lying flat on a flat two dimensional surface. When the tablecloth is set on a circular table and drapes from it, it needs to fit into a new geometry, transitioning from disk $ds_D^2 = dr_{\text{ph}}^2 + r_{\text{ph}}^2 d\theta^2$ to cylinder $ds_C^2 = dr_{\text{ph}}^2 + R^2 d\theta$ with $R$ being the radius of the circular table. Because of the volume mismatch, $\sqrt{g_D}/\sqrt{g_C}=r_{\text{ph}}/R$, the tablecloth needs to wrinkle up. Also note how the radial coordinate, $r$, differs from $r_{\text{ph}}$. $r_{\text{ph}}$ is the physical distance a traveler takes on the tablecloth, while $r$ is identified with the radial component of the cylindrical coordinates.}
\label{fig:Tablecloth}
\end{figure}

We also note the connection between curvature and transience or recurrence of Brownian motion. Intuitively, negative curvature regions (like in hyperbolic space) cause a Brownian traveler to leave compact regions quickly, while zero and positive curvature regions do not. Therefore we expect that 2D surfaces with sufficiently negative curvature at infinity will have transient Brownian motion, while those with zero or positive curvature will have recurrent Brownian motion; for precise statements see Refs.~\cite{Milnor1977,Doyle1988}. However, the `tablecloth manifolds' we construct have a mix of negative and positive curvature regions, giving an average zero curvature as described below.
Despite this average ``flatness'', the negative curvature regions can still win out and cause the Brownian motion to escape.
This resonates with the ideas of Zeldovich considering a ``universe homogeneous in the mean'' with randomly fluctuating metric that still appears hyperbolic \cite{zel1964observations,sokoloff2015intermittency}.
There, Zeldovich argued using the idea of ``intermittency'' that an overall flat universe with patches of positive and negative curvature would appear hyperbolic to an observer, due to exponential growth of geodesics in the negative curvature regions.

A generic tablecloth manifold is depicted in the right half of Fig.~\ref{fig:Manifolds}. For our considerations we would like to investigate a simplified example of a tablecloth manifold (see the left membrane in Fig.~\ref{fig:Manifolds}) given by the following height function,
\begin{align}\label{eqn:h}
h(r,\theta)&=\sum_{n=1}^\infty \chi_n(r)\cos(n^4\theta) \, ,
\end{align}
where $\chi_n(r)$ are smooth bump functions extended from $n(n-1)/2$ to $n(n+1)/2$, which also means each bump is centered at $r = n^2/2$ and has increasing length $\sim n$.
The $\chi_n$ are exactly equal to 1 for most of the interval, and exactly equal to zero in a small interval near each endpoint.
Examples of the bump functions are shown in Fig.~\ref{fig:bumpfunctions}, and precise requirements and a sample formula are given in the Supplemental Material \cite{sm}. 
We note that by construction of the $\chi_n(r)$, which only extend from $n(n-1)/2$ to $n(n+1)/2$, only one term in the sum in Eq.~\eqref{eqn:h} can contribute at any given $r$.

\begin{figure}[htb]
% \begin{tikzpicture}
%   \draw[->] (-0.2,0) -- (6.2,0) node[right] {$r$};
%   \draw[->,yshift=-.01cm] (0,-.2) -- (0,1.5); 

%   \def\lw{.8pt}
%   \foreach \n in {1,2,3}{
% 	  \draw[line width=\lw] ({\n*(\n-1)/2+.35},1)--({\n*(\n-1)/2+\n-.35},1);
% 	  \node[above] at ({\n*(\n-1)/2+\n/2},1) {$\chi_\n(r)$};
	  
% 	  \draw[line width=\lw] plot [smooth, tension=.5] coordinates {({\n*(\n-1)/2},0) ({\n*(\n-1)/2+1/8},.05) ({\n*(\n-1)/2+3/16},.5) ({\n*(\n-1)/2+1/4},1-.05) ({\n*(\n-1)/2+3/8},1)};
	  
% 	  \draw[xshift={\n cm - .375cm},line width=\lw] plot [smooth, tension=.5] coordinates {({\n*(\n-1)/2+3/8},0) ({\n*(\n-1)/2+1/4},.05) ({\n*(\n-1)/2+3/16},.5) ({\n*(\n-1)/2+1/8},1-.05) ({\n*(\n-1)/2},1)};
%   }
  
%   \foreach \xmark in {0,...,6}{
%   \draw (\xmark,.1)--(\xmark,-.1) node[below] {\xmark};
%   }
%   \draw (.1,1)--(-.1,1) node[left] {$1$};
%   \node at (6.2,.5) {$\cdots$};
% \end{tikzpicture}
\includegraphics{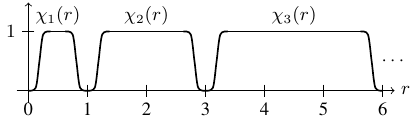}
\caption{Drawing of example bump functions $\chi_n(r)$, $n=1,2,3$. Each function $\chi_n$ is supported in $[n(n-1)/2,n(n+1)/2]$, and has a flat plateau within that interval where it is exactly equal to one, along with a small interval near the endpoints where it is exactly equal to zero. Precise requirements for the bump functions are given in the Supplemental Material \cite{sm}.}\label{fig:bumpfunctions}
\end{figure}

Within each annulus determined by a bump $\chi_n$, the membrane oscillates rapidly according to  $\cos(n^4\theta)$, with faster oscillations as $r$ increases. This causes the volume element $\sqrt{ g(r,\theta)}$ to increase rapidly, while the coordinate distance $r$ stays proportional to the physical distance on the membrane.
The membrane also stays bounded in height between $-1$ and $1$.

With the use of these bump functions, we can quickly see the membrane satisfies an average curvature zero condition: on the balls $B_R$ of radius $R=n(n+1)/2$ centered at the origin, the total Gaussian curvature is zero by the Gauss--Bonnet theorem, since the manifold is completely flat in the neighborhood of $\partial B_R$ where $\chi_n(r),\chi_{n+1}(r)=0$:
\begin{align}
\int_{B_R}K\,d\mathrm{Vol}_g&=2\pi\chi(B_R)-\int_{\partial B_R}k_g\,ds=2\pi-\frac{2\pi R}{R}=0 \, .
\end{align}
On these balls, the boundary $\partial B_R$ is particularly nice since it is flat, and does not grow rapidly like in hyperbolic space. 

For $\frac{n(n-1)}{2}+\frac{1}{4}<r<\frac{n(n+1)}{2}-\frac{1}{4}$, since $\chi_n\equiv1$ there, we have a simple form of the metric,
\begin{align}
g_{ab}&=\begin{bmatrix}
1& 0\\
0&r^2+n^8\sin^2(n^4\theta)
\end{bmatrix} \, .
\end{align}
Outside these regions, the metric is messier (involving the non-constant terms $A^2(r,\theta)$ and $B(r,\theta)$), but we still have $g(r,\theta)\ge r^2$ for a membrane.
As shown in the Supplemental Material, splitting up the the integral in Eq.~\eqref{eqn:transience-cond} into regions where $\chi_n\equiv1$ or not, and using that ${1+|\chi_n'(r)|^2}\le c'$ for some constant $c'$ and all $n$ and $r$, we can 
reduce the condition of Eq.~\eqref{eqn:transience-cond} to showing that
\begin{align}\label{eqn:series-main}
\sum_{n=1}^\infty\frac{1}{n\sqrt{1+n^4\sin^2(n^4\theta)}}<\infty \, .
\end{align}
The terms in this series tend to decay like $1/n^3$, except when $\sin^2(n^4\theta)$ is small, which is when $n^4\theta$ is close to a multiple of $\pi$. For these $n$ the terms behave like $1/n$, which is potentially problematic since $\sum_{n=1}^\infty 1/n$ diverges. However, we can quantify when $\sin^2(n^4\theta)$ is ``small'' using equidistribution of $\{n^4x\}$ modulo 1 for almost every (a.e.) $x$; more precisely, using the Erd\H{o}s--Tur\'an inequality \cite{ErdosTuran1948}, which gives a quantitative bound on the rate in Weyl's equidistribution theorem,
and the application by Ref.~\cite{Baker1981}.
This will imply that Eq.~\eqref{eqn:series-main} holds for a.e. $\theta$, and so the condition Eq.~\eqref{eqn:transience-cond} for transience is met.
We provide full details in the Supplemental Material \cite{sm}.

Interestingly enough, exactly due to the periodicity in $\theta$, we see that transience on tablecloth manifolds is connected to the convergence problem of series. For example, by considering the tablecloth manifold on a cylinder rather than a plane, the transience condition will be related to the convergence of series of the following type,
\begin{equation}
    \sum^\infty_{n=1} \frac{1}{\sqrt{1+n^p \sin^q(n^r \theta)}} \, . 
\end{equation}

Finally, we note there are several subtleties involved in the described membrane construction. In particular, it is possible to construct a tablecloth-like manifold, even with faster volume growth than that of Eq.~\eqref{eqn:h}, that does not satisfy Eq.~\eqref{eqn:transience-cond}.
For example, if we change $\chi_n(r)$ to have plateau length $1$ (Fig.~\ref{fig:even-bumps}) instead of length $n$, this {increases} the volume growth rate of the manifold, but causes the integral in Eq.~\eqref{eqn:transience-cond} to be divergent for every $\theta\in[0,2\pi)$. We expect, due to the fact that the radial distance factor $A^2(r,\theta)$ in the metric is uniformly bounded in $(r,\theta)$, that the arguments  \cite{Ahlfors1935,Nevanlinna1940,LyonsSullivan1984,Grigoryan1985existence} (see also a heuristic resistance argument in Ref.~\cite{Doyle1988}) actually imply recurrence. 
We expect this because the thin flat rings where $\chi_n(r)\equiv 0$ and the metric is flat, $ds^2=dr^2+r^2\,d\theta^2$, now occur linearly with $r$, and integrating $1/r$ over evenly spaced intervals leads to a divergent integral.
For generic tablecloth manifolds, where we do not impose strong cutoff functions like $\chi_n$, we do not even have any thin flat rings, and so we expect one would generally have transience.

\begin{figure}[htb]
% \begin{tikzpicture}[xscale=1.5]
%   \draw[->,yshift=-.01cm] (-0.2,0) -- (5.2,0) node[right] {$r$};
%   \draw[->] (0,-.2) -- (0,1.5); 

%   \def\lw{.8pt}
%   \foreach \n in {1,2,3,4,5}{
% 	  \draw[xshift=-1cm,line width=\lw] ({\n+.35},1)--({\n+1-.35},1);
% 	  \node[xshift=-1.5cm,above] at ({\n+.5},1) {$\chi_\n(r)$};
	  
% 	  \draw[xshift=-1cm, line width=\lw] plot [smooth, tension=.5] coordinates {({\n},0) ({\n+1/8},.05) ({\n+3/16},.5) ({\n+1/4},1-.05) ({\n+3/8},1)};
	  
% 	  \draw[xshift={-.375cm},line width=\lw] plot [smooth, tension=.5] coordinates {({\n+3/8},0) ({\n+1/4},.05) ({\n+3/16},.5) ({\n+1/8},1-.05) ({\n},1)};
%   }
  
%   \foreach \xmark in {0,...,5}{
%   \draw (\xmark,.1)--(\xmark,-.1) node[below] {\xmark};
%   }
%   \draw (.1,1)--(-.1,1) node[left] {$1$};
%   \node at (5.2,.5) {$\cdots$};
% \end{tikzpicture}
\includegraphics{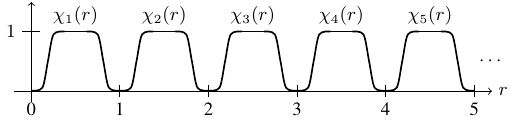}
\caption{Evenly spaced plateaus, leading to a specific example that fails to satisfy Eq.~\eqref{eqn:transience-cond}.}\label{fig:even-bumps}
\end{figure}

\textit{Conclusion}--- In summary, this paper demonstrates the existence of transient two-dimensional manifolds, which can be smoothly embedded as membranes in  flat three-dimensional space. The transient tablecloth manifolds constructed here are just one example of membranes with zero average curvature, and many other such transient geometries exist including certain randomly curved manifolds. These surfaces are potentially realizable  not only through quantum simulators but also as two-dimensional materials involving rough substrates or corrugated surfaces. Apart from the obvious relation to Anderson localization, such transient  membranes give rise to other types of unconventional physics where the familiar two-dimensional behavior may be reversed. For example, a Heisenberg magnet on a curved transient manifold would exhibit a finite-temperature long-range order. In contrast, the BKT transition of the $XY$-model would be lost due to non-logarithmic vortex binding forces. Ideas for further work of relevance to solid-state experiment include search for transient curved bilayer systems, which map on interesting geometric structures~\cite{moireG}.

\begin{acknowledgments}
This work was supported by the U.S. Department of Energy, Office of Science, Basic Energy Sciences under Award No.
DE-SC0001911 and the Julian Schwinger Foundation (L.S.). L.S. and V.G. also acknowledge partial support from the Simons Foundation via Collaboration ``Localization of Waves'' at the initial stages of this project.
\end{acknowledgments}

% references for Supplemental Material
\nocite{bull,LSW1963,ErdosTuran1948,Aistleitner2014,Bilyk2014,RivatTenenbaum2005,Baker1981,kollar2019hyperbolic,boettcher2020quantum,blais2021circuit,lenggenhager2022simulating}

\bibliography{brownian}

%apsrev4-2.bst 2019-01-14 (MD) hand-edited version of apsrev4-1.bst
%Control: key (0)
%Control: author (8) initials jnrlst
%Control: editor formatted (1) identically to author
%Control: production of article title (0) allowed
%Control: page (0) single
%Control: year (1) truncated
%Control: production of eprint (0) enabled
\begin{thebibliography}{40}%
\makeatletter
\providecommand \@ifxundefined [1]{%
 \@ifx{#1\undefined}
}%
\providecommand \@ifnum [1]{%
 \ifnum #1\expandafter \@firstoftwo
 \else \expandafter \@secondoftwo
 \fi
}%
\providecommand \@ifx [1]{%
 \ifx #1\expandafter \@firstoftwo
 \else \expandafter \@secondoftwo
 \fi
}%
\providecommand \natexlab [1]{#1}%
\providecommand \enquote  [1]{``#1''}%
\providecommand \bibnamefont  [1]{#1}%
\providecommand \bibfnamefont [1]{#1}%
\providecommand \citenamefont [1]{#1}%
\providecommand \href@noop [0]{\@secondoftwo}%
\providecommand \href [0]{\begingroup \@sanitize@url \@href}%
\providecommand \@href[1]{\@@startlink{#1}\@@href}%
\providecommand \@@href[1]{\endgroup#1\@@endlink}%
\providecommand \@sanitize@url [0]{\catcode `\\12\catcode `\$12\catcode
  `\&12\catcode `\#12\catcode `\^12\catcode `\_12\catcode `\%12\relax}%
\providecommand \@@startlink[1]{}%
\providecommand \@@endlink[0]{}%
\providecommand \url  [0]{\begingroup\@sanitize@url \@url }%
\providecommand \@url [1]{\endgroup\@href {#1}{\urlprefix }}%
\providecommand \urlprefix  [0]{URL }%
\providecommand \Eprint [0]{\href }%
\providecommand \doibase [0]{https://doi.org/}%
\providecommand \selectlanguage [0]{\@gobble}%
\providecommand \bibinfo  [0]{\@secondoftwo}%
\providecommand \bibfield  [0]{\@secondoftwo}%
\providecommand \translation [1]{[#1]}%
\providecommand \BibitemOpen [0]{}%
\providecommand \bibitemStop [0]{}%
\providecommand \bibitemNoStop [0]{.\EOS\space}%
\providecommand \EOS [0]{\spacefactor3000\relax}%
\providecommand \BibitemShut  [1]{\csname bibitem#1\endcsname}%
\let\auto@bib@innerbib\@empty
%</preamble>
\bibitem [{\citenamefont {Grigor'yan}(1999)}]{bull}%
  \BibitemOpen
  \bibfield  {author} {\bibinfo {author} {\bibfnamefont {A.}~\bibnamefont
  {Grigor'yan}},\ }\bibfield  {title} {\bibinfo {title} {Analytic and geometric
  background of recurrence and non-explosion of the {B}rownian motion on
  {R}iemannian manifolds},\ }\href
  {https://doi.org/10.1090/S0273-0979-99-00776-4} {\bibfield  {journal}
  {\bibinfo  {journal} {Bull. Amer. Math. Soc. (N.S.)}\ }\textbf {\bibinfo
  {volume} {36}},\ \bibinfo {pages} {135} (\bibinfo {year} {1999})}\BibitemShut
  {NoStop}%
\bibitem [{\citenamefont {Altshuler}\ and\ \citenamefont
  {Aranov}(1985)}]{AltshulerAranov1985}%
  \BibitemOpen
  \bibfield  {author} {\bibinfo {author} {\bibfnamefont {B.}~\bibnamefont
  {Altshuler}}\ and\ \bibinfo {author} {\bibfnamefont {A.}~\bibnamefont
  {Aranov}},\ }\bibfield  {title} {\bibinfo {title} {Chapter 1 -
  electron–electron interaction in disordered conductors},\ }in\ \href
  {https://doi.org/https://doi.org/10.1016/B978-0-444-86916-6.50007-7} {\emph
  {\bibinfo {booktitle} {Electron–Electron Interactions in Disordered
  Systems}}},\ \bibinfo {series} {Modern Problems in Condensed Matter
  Sciences}, Vol.~\bibinfo {volume} {10},\ \bibinfo {editor} {edited by\
  \bibinfo {editor} {\bibfnamefont {A.}~\bibnamefont {Efros}}\ and\ \bibinfo
  {editor} {\bibfnamefont {M.}~\bibnamefont {Pollack}}}\ (\bibinfo  {publisher}
  {Elsevier},\ \bibinfo {year} {1985})\ pp.\ \bibinfo {pages}
  {1--153}\BibitemShut {NoStop}%
\bibitem [{\citenamefont {Lee}\ and\ \citenamefont
  {Ramakrishnan}(1985)}]{LeeRamakrishnan1985}%
  \BibitemOpen
  \bibfield  {author} {\bibinfo {author} {\bibfnamefont {P.~A.}\ \bibnamefont
  {Lee}}\ and\ \bibinfo {author} {\bibfnamefont {T.~V.}\ \bibnamefont
  {Ramakrishnan}},\ }\bibfield  {title} {\bibinfo {title} {Disordered
  electronic systems},\ }\href {https://doi.org/10.1103/RevModPhys.57.287}
  {\bibfield  {journal} {\bibinfo  {journal} {Rev. Mod. Phys.}\ }\textbf
  {\bibinfo {volume} {57}},\ \bibinfo {pages} {287} (\bibinfo {year}
  {1985})}\BibitemShut {NoStop}%
\bibitem [{\citenamefont {Curtis}\ \emph {et~al.}(2025)\citenamefont {Curtis},
  \citenamefont {Narang},\ and\ \citenamefont {Galitski}}]{CNG2025}%
  \BibitemOpen
  \bibfield  {author} {\bibinfo {author} {\bibfnamefont {J.~B.}\ \bibnamefont
  {Curtis}}, \bibinfo {author} {\bibfnamefont {P.}~\bibnamefont {Narang}},\
  and\ \bibinfo {author} {\bibfnamefont {V.}~\bibnamefont {Galitski}},\
  }\bibfield  {title} {\bibinfo {title} {Absence of weak localization on
  negative curvature surfaces},\ }\href
  {https://doi.org/10.1103/PhysRevLett.134.076301} {\bibfield  {journal}
  {\bibinfo  {journal} {Phys. Rev. Lett.}\ }\textbf {\bibinfo {volume} {134}},\
  \bibinfo {pages} {076301} (\bibinfo {year} {2025})}\BibitemShut {NoStop}%
\bibitem [{\citenamefont {Chadan}\ \emph {et~al.}(2003)\citenamefont {Chadan},
  \citenamefont {Khuri}, \citenamefont {Martin},\ and\ \citenamefont
  {Wu}}]{ShallowPotential}%
  \BibitemOpen
  \bibfield  {author} {\bibinfo {author} {\bibfnamefont {K.}~\bibnamefont
  {Chadan}}, \bibinfo {author} {\bibfnamefont {N.~N.}\ \bibnamefont {Khuri}},
  \bibinfo {author} {\bibfnamefont {A.}~\bibnamefont {Martin}},\ and\ \bibinfo
  {author} {\bibfnamefont {T.~T.}\ \bibnamefont {Wu}},\ }\bibfield  {title}
  {\bibinfo {title} {Bound states in one and two spatial dimensions},\ }\href
  {https://doi.org/10.1063/1.1532538} {\bibfield  {journal} {\bibinfo
  {journal} {J. Math. Phys.}\ }\textbf {\bibinfo {volume} {44}},\ \bibinfo
  {pages} {406} (\bibinfo {year} {2003})}\BibitemShut {NoStop}%
\bibitem [{\citenamefont {Mermin}\ and\ \citenamefont
  {Wagner}(1966)}]{MerminWagner1966}%
  \BibitemOpen
  \bibfield  {author} {\bibinfo {author} {\bibfnamefont {N.~D.}\ \bibnamefont
  {Mermin}}\ and\ \bibinfo {author} {\bibfnamefont {H.}~\bibnamefont
  {Wagner}},\ }\bibfield  {title} {\bibinfo {title} {Absence of ferromagnetism
  or antiferromagnetism in one- or two-dimensional isotropic heisenberg
  models},\ }\href {https://doi.org/10.1103/PhysRevLett.17.1133} {\bibfield
  {journal} {\bibinfo  {journal} {Phys. Rev. Lett.}\ }\textbf {\bibinfo
  {volume} {17}},\ \bibinfo {pages} {1133} (\bibinfo {year}
  {1966})}\BibitemShut {NoStop}%
\bibitem [{\citenamefont {Hohenberg}(1967)}]{Hohenberg1967}%
  \BibitemOpen
  \bibfield  {author} {\bibinfo {author} {\bibfnamefont {P.~C.}\ \bibnamefont
  {Hohenberg}},\ }\bibfield  {title} {\bibinfo {title} {Existence of long-range
  order in one and two dimensions},\ }\href
  {https://doi.org/10.1103/PhysRev.158.383} {\bibfield  {journal} {\bibinfo
  {journal} {Phys. Rev.}\ }\textbf {\bibinfo {volume} {158}},\ \bibinfo {pages}
  {383} (\bibinfo {year} {1967})}\BibitemShut {NoStop}%
\bibitem [{\citenamefont {Friedli}\ and\ \citenamefont
  {Velenik}(2018)}]{FriedliVelenik}%
  \BibitemOpen
  \bibfield  {author} {\bibinfo {author} {\bibfnamefont {S.}~\bibnamefont
  {Friedli}}\ and\ \bibinfo {author} {\bibfnamefont {Y.}~\bibnamefont
  {Velenik}},\ }\href@noop {} {\emph {\bibinfo {title} {Statistical mechanics
  of lattice systems: A concrete mathematical introduction}}}\ (\bibinfo
  {publisher} {Cambridge University Press, Cambridge},\ \bibinfo {year}
  {2018})\ pp.\ \bibinfo {pages} {xix+622}\BibitemShut {NoStop}%
\bibitem [{\citenamefont {Minnhagen}(1987)}]{BKTGreenRef}%
  \BibitemOpen
  \bibfield  {author} {\bibinfo {author} {\bibfnamefont {P.}~\bibnamefont
  {Minnhagen}},\ }\bibfield  {title} {\bibinfo {title} {The two-dimensional
  {C}oulomb gas, vortex unbinding, and superfluid-superconducting films},\
  }\href {https://doi.org/10.1103/RevModPhys.59.1001} {\bibfield  {journal}
  {\bibinfo  {journal} {Rev. Mod. Phys.}\ }\textbf {\bibinfo {volume} {59}},\
  \bibinfo {pages} {1001} (\bibinfo {year} {1987})}\BibitemShut {NoStop}%
\bibitem [{\citenamefont {Chen}\ \emph {et~al.}(2024)\citenamefont {Chen},
  \citenamefont {Maciejko},\ and\ \citenamefont
  {Boettcher}}]{chen2024anderson}%
  \BibitemOpen
  \bibfield  {author} {\bibinfo {author} {\bibfnamefont {A.}~\bibnamefont
  {Chen}}, \bibinfo {author} {\bibfnamefont {J.}~\bibnamefont {Maciejko}},\
  and\ \bibinfo {author} {\bibfnamefont {I.}~\bibnamefont {Boettcher}},\
  }\bibfield  {title} {\bibinfo {title} {Anderson localization transition in
  disordered hyperbolic lattices},\ }\href {https://arxiv.org/abs/2310.07978}
  {\bibfield  {journal} {\bibinfo  {journal} {Phys. Rev. Lett.}\ }\textbf
  {\bibinfo {volume} {133}},\ \bibinfo {pages} {066101} (\bibinfo {year}
  {2024})}\BibitemShut {NoStop}%
\bibitem [{\citenamefont {McKean}(1970)}]{McKean1970}%
  \BibitemOpen
  \bibfield  {author} {\bibinfo {author} {\bibfnamefont {H.~P.}\ \bibnamefont
  {McKean}},\ }\bibfield  {title} {\bibinfo {title} {An upper bound to the
  spectrum of {$\Delta $} on a manifold of negative curvature},\ }\href
  {http://projecteuclid.org/euclid.jdg/1214429509} {\bibfield  {journal}
  {\bibinfo  {journal} {J. Differential Geometry}\ }\textbf {\bibinfo {volume}
  {4}},\ \bibinfo {pages} {359} (\bibinfo {year} {1970})}\BibitemShut {NoStop}%
\bibitem [{\citenamefont {Prat}(1971)}]{Prat1971}%
  \BibitemOpen
  \bibfield  {author} {\bibinfo {author} {\bibfnamefont {J.-J.}\ \bibnamefont
  {Prat}},\ }\bibfield  {title} {\bibinfo {title} {\'{E}tude asymptotique du
  mouvement brownien sur une vari\'et\'e{} riemannienne \`a{} courbure
  n\'egative},\ }\href@noop {} {\bibfield  {journal} {\bibinfo  {journal} {C.
  R. Acad. Sci. Paris S\'er. A-B}\ }\textbf {\bibinfo {volume} {272}},\
  \bibinfo {pages} {A1586} (\bibinfo {year} {1971})}\BibitemShut {NoStop}%
\bibitem [{\citenamefont {Kesten}(1959)}]{Kesten1959full}%
  \BibitemOpen
  \bibfield  {author} {\bibinfo {author} {\bibfnamefont {H.}~\bibnamefont
  {Kesten}},\ }\bibfield  {title} {\bibinfo {title} {Full {B}anach mean values
  on countable groups},\ }\href {https://doi.org/10.7146/math.scand.a-10568}
  {\bibfield  {journal} {\bibinfo  {journal} {Math. Scand.}\ }\textbf {\bibinfo
  {volume} {7}},\ \bibinfo {pages} {146} (\bibinfo {year} {1959})}\BibitemShut
  {NoStop}%
\bibitem [{\citenamefont {Derriennic}\ and\ \citenamefont
  {Guivarc'h}(1973)}]{DerriennicGuivarch1973}%
  \BibitemOpen
  \bibfield  {author} {\bibinfo {author} {\bibfnamefont {Y.}~\bibnamefont
  {Derriennic}}\ and\ \bibinfo {author} {\bibfnamefont {Y.}~\bibnamefont
  {Guivarc'h}},\ }\bibfield  {title} {\bibinfo {title} {Th\'eor\`eme de
  renouvellement pour les groupes non moyennables},\ }\href@noop {} {\bibfield
  {journal} {\bibinfo  {journal} {C. R. Acad. Sci. Paris S\'er. A-B}\ }\textbf
  {\bibinfo {volume} {277}},\ \bibinfo {pages} {A613} (\bibinfo {year}
  {1973})}\BibitemShut {NoStop}%
\bibitem [{\citenamefont {Ahlfors}(1935)}]{Ahlfors1935}%
  \BibitemOpen
  \bibfield  {author} {\bibinfo {author} {\bibfnamefont {L.~V.}\ \bibnamefont
  {Ahlfors}},\ }\bibfield  {title} {\bibinfo {title} {Sur le type d’une
  surface de {R}iemann},\ }\href@noop {} {\bibfield  {journal} {\bibinfo
  {journal} {C.R. Acad. Sci. Paris}\ }\textbf {\bibinfo {volume} {201}},\
  \bibinfo {pages} {30} (\bibinfo {year} {1935})}\BibitemShut {NoStop}%
\bibitem [{\citenamefont {Milnor}(1977)}]{Milnor1977}%
  \BibitemOpen
  \bibfield  {author} {\bibinfo {author} {\bibfnamefont {J.}~\bibnamefont
  {Milnor}},\ }\bibfield  {title} {\bibinfo {title} {On deciding whether a
  surface is parabolic or hyperbolic},\ }\href
  {https://doi.org/10.2307/2318308} {\bibfield  {journal} {\bibinfo  {journal}
  {Amer. Math. Monthly}\ }\textbf {\bibinfo {volume} {84}},\ \bibinfo {pages}
  {43} (\bibinfo {year} {1977})}\BibitemShut {NoStop}%
\bibitem [{sm()}]{sm}%
  \BibitemOpen
  \href@noop {} {}\bibinfo {note} {See Supplemental Material, which includes
  Refs.~\cite{bull,LSW1963,ErdosTuran1948,Aistleitner2014,Bilyk2014,RivatTenenbaum2005,Baker1981,kollar2019hyperbolic,boettcher2020quantum,blais2021circuit,lenggenhager2022simulating}.}\BibitemShut
  {Stop}%
\bibitem [{\citenamefont {Koll{\'a}r}\ \emph {et~al.}(2019)\citenamefont
  {Koll{\'a}r}, \citenamefont {Fitzpatrick},\ and\ \citenamefont
  {Houck}}]{kollar2019hyperbolic}%
  \BibitemOpen
  \bibfield  {author} {\bibinfo {author} {\bibfnamefont {A.~J.}\ \bibnamefont
  {Koll{\'a}r}}, \bibinfo {author} {\bibfnamefont {M.}~\bibnamefont
  {Fitzpatrick}},\ and\ \bibinfo {author} {\bibfnamefont {A.~A.}\ \bibnamefont
  {Houck}},\ }\bibfield  {title} {\bibinfo {title} {Hyperbolic lattices in
  circuit quantum electrodynamics},\ }\href
  {https://doi.org/10.1038/s41586-019-1348-3} {\bibfield  {journal} {\bibinfo
  {journal} {Nature}\ }\textbf {\bibinfo {volume} {571}},\ \bibinfo {pages}
  {45} (\bibinfo {year} {2019})}\BibitemShut {NoStop}%
\bibitem [{\citenamefont {Boettcher}\ \emph {et~al.}(2020)\citenamefont
  {Boettcher}, \citenamefont {Bienias}, \citenamefont {Belyansky},
  \citenamefont {Koll\'ar},\ and\ \citenamefont
  {Gorshkov}}]{boettcher2020quantum}%
  \BibitemOpen
  \bibfield  {author} {\bibinfo {author} {\bibfnamefont {I.}~\bibnamefont
  {Boettcher}}, \bibinfo {author} {\bibfnamefont {P.}~\bibnamefont {Bienias}},
  \bibinfo {author} {\bibfnamefont {R.}~\bibnamefont {Belyansky}}, \bibinfo
  {author} {\bibfnamefont {A.~J.}\ \bibnamefont {Koll\'ar}},\ and\ \bibinfo
  {author} {\bibfnamefont {A.~V.}\ \bibnamefont {Gorshkov}},\ }\bibfield
  {title} {\bibinfo {title} {Quantum simulation of hyperbolic space with
  circuit quantum electrodynamics: From graphs to geometry},\ }\href
  {https://doi.org/10.1103/PhysRevA.102.032208} {\bibfield  {journal} {\bibinfo
   {journal} {Phys. Rev. A}\ }\textbf {\bibinfo {volume} {102}},\ \bibinfo
  {pages} {032208} (\bibinfo {year} {2020})}\BibitemShut {NoStop}%
\bibitem [{\citenamefont {Blais}\ \emph {et~al.}(2021)\citenamefont {Blais},
  \citenamefont {Grimsmo}, \citenamefont {Girvin},\ and\ \citenamefont
  {Wallraff}}]{blais2021circuit}%
  \BibitemOpen
  \bibfield  {author} {\bibinfo {author} {\bibfnamefont {A.}~\bibnamefont
  {Blais}}, \bibinfo {author} {\bibfnamefont {A.~L.}\ \bibnamefont {Grimsmo}},
  \bibinfo {author} {\bibfnamefont {S.~M.}\ \bibnamefont {Girvin}},\ and\
  \bibinfo {author} {\bibfnamefont {A.}~\bibnamefont {Wallraff}},\ }\bibfield
  {title} {\bibinfo {title} {Circuit quantum electrodynamics},\ }\href
  {https://doi.org/10.1103/RevModPhys.93.025005} {\bibfield  {journal}
  {\bibinfo  {journal} {Rev. Mod. Phys.}\ }\textbf {\bibinfo {volume} {93}},\
  \bibinfo {pages} {025005} (\bibinfo {year} {2021})}\BibitemShut {NoStop}%
\bibitem [{\citenamefont {Lenggenhager}\ \emph {et~al.}(2022)\citenamefont
  {Lenggenhager}, \citenamefont {Stegmaier}, \citenamefont {Upreti},
  \citenamefont {Hofmann}, \citenamefont {Helbig}, \citenamefont {Vollhardt},
  \citenamefont {Greiter}, \citenamefont {Lee}, \citenamefont {Imhof},
  \citenamefont {Brand} \emph {et~al.}}]{lenggenhager2022simulating}%
  \BibitemOpen
  \bibfield  {author} {\bibinfo {author} {\bibfnamefont {P.~M.}\ \bibnamefont
  {Lenggenhager}}, \bibinfo {author} {\bibfnamefont {A.}~\bibnamefont
  {Stegmaier}}, \bibinfo {author} {\bibfnamefont {L.~K.}\ \bibnamefont
  {Upreti}}, \bibinfo {author} {\bibfnamefont {T.}~\bibnamefont {Hofmann}},
  \bibinfo {author} {\bibfnamefont {T.}~\bibnamefont {Helbig}}, \bibinfo
  {author} {\bibfnamefont {A.}~\bibnamefont {Vollhardt}}, \bibinfo {author}
  {\bibfnamefont {M.}~\bibnamefont {Greiter}}, \bibinfo {author} {\bibfnamefont
  {C.~H.}\ \bibnamefont {Lee}}, \bibinfo {author} {\bibfnamefont
  {S.}~\bibnamefont {Imhof}}, \bibinfo {author} {\bibfnamefont
  {H.}~\bibnamefont {Brand}}, \emph {et~al.},\ }\bibfield  {title} {\bibinfo
  {title} {Simulating hyperbolic space on a circuit board},\ }\href
  {https://doi.org/10.1038/s41467-022-32042-4} {\bibfield  {journal} {\bibinfo
  {journal} {Nature communications}\ }\textbf {\bibinfo {volume} {13}},\
  \bibinfo {pages} {4373} (\bibinfo {year} {2022})}\BibitemShut {NoStop}%
\bibitem [{\citenamefont {Bowick}\ and\ \citenamefont
  {Giomi}(2009)}]{BowickGiomi2d}%
  \BibitemOpen
  \bibfield  {author} {\bibinfo {author} {\bibfnamefont {M.~J.}\ \bibnamefont
  {Bowick}}\ and\ \bibinfo {author} {\bibfnamefont {L.}~\bibnamefont {Giomi}},\
  }\bibfield  {title} {\bibinfo {title} {Two-dimensional matter: order,
  curvature and defects},\ }\href {https://doi.org/10.1080/00018730903043166}
  {\bibfield  {journal} {\bibinfo  {journal} {Advances in Physics}\ }\textbf
  {\bibinfo {volume} {58}},\ \bibinfo {pages} {449} (\bibinfo {year}
  {2009})}\BibitemShut {NoStop}%
\bibitem [{Note1()}]{Note1}%
  \BibitemOpen
  \bibinfo {note} {The non-zero Christoffel symbols, $\Gamma ^a_{bc} \equiv
  \protect \frac {1}{2}g^{ad}( \partial _b g_{cd} + \partial _c g_{bd} -
  \partial _d g_{bc})$, for this metric are:$ \Gamma ^r_{\theta \theta } =
  \protect \frac {r}{A^2(r)} \protect \, , \hskip 1em\relax \Gamma ^\theta
  _{r\theta } = \protect \frac {1}{r} \hskip 1em\relax \protect \text {and}
  \hskip 1em\relax \Gamma ^r_{rr} = \protect \frac {A'(r)}{A(r)} \protect \, ,$
  which leads to the only independent component of the Riemann tensor, $R^r_{\
  \theta r\theta }= r A'/A^3$, and the scalar curvature, \begin {equation} R =
  2g^{\theta \theta } R^r_{\ \theta r\theta } = \protect \frac {2}{r}\protect
  \frac {A'(r)}{ A^3(r)} \protect \, . \end {equation} In two dimensions, this
  Ricci scalar is twice the Gaussian curvature. Hence, for a spherically curved
  membrane with $h(r)=\protect \sqrt {1 - r^2}$, $R=2$, while the Gaussian
  curvature is the inverse product of the principal radii of the membrane,
  which here is equal to one everywhere.}\BibitemShut {Stop}%
\bibitem [{\citenamefont {Doyle}(1988)}]{Doyle1988}%
  \BibitemOpen
  \bibfield  {author} {\bibinfo {author} {\bibfnamefont {P.~G.}\ \bibnamefont
  {Doyle}},\ }\bibfield  {title} {\bibinfo {title} {On deciding whether a
  surface is parabolic or hyperbolic},\ }in\ \href
  {https://doi.org/10.1090/conm/073/954627} {\emph {\bibinfo {booktitle}
  {Geometry of random motion ({I}thaca, {N}.{Y}., 1987)}}},\ \bibinfo {series}
  {Contemp. Math.}, Vol.~\bibinfo {volume} {73}\ (\bibinfo  {publisher} {Amer.
  Math. Soc., Providence, RI},\ \bibinfo {year} {1988})\ pp.\ \bibinfo {pages}
  {41--48}\BibitemShut {NoStop}%
\bibitem [{\citenamefont {Cheng}\ and\ \citenamefont
  {Yau}(1975)}]{ChengYau1975}%
  \BibitemOpen
  \bibfield  {author} {\bibinfo {author} {\bibfnamefont {S.~Y.}\ \bibnamefont
  {Cheng}}\ and\ \bibinfo {author} {\bibfnamefont {S.~T.}\ \bibnamefont
  {Yau}},\ }\bibfield  {title} {\bibinfo {title} {Differential equations on
  {R}iemannian manifolds and their geometric applications},\ }\href
  {https://doi.org/10.1002/cpa.3160280303} {\bibfield  {journal} {\bibinfo
  {journal} {Comm. Pure Appl. Math.}\ }\textbf {\bibinfo {volume} {28}},\
  \bibinfo {pages} {333} (\bibinfo {year} {1975})}\BibitemShut {NoStop}%
\bibitem [{\citenamefont {Abrahams}\ \emph {et~al.}(1979)\citenamefont
  {Abrahams}, \citenamefont {Anderson}, \citenamefont {Licciardello},\ and\
  \citenamefont {Ramakrishnan}}]{aalr}%
  \BibitemOpen
  \bibfield  {author} {\bibinfo {author} {\bibfnamefont {E.}~\bibnamefont
  {Abrahams}}, \bibinfo {author} {\bibfnamefont {P.~W.}\ \bibnamefont
  {Anderson}}, \bibinfo {author} {\bibfnamefont {D.~C.}\ \bibnamefont
  {Licciardello}},\ and\ \bibinfo {author} {\bibfnamefont {T.~V.}\ \bibnamefont
  {Ramakrishnan}},\ }\bibfield  {title} {\bibinfo {title} {Scaling theory of
  localization: Absence of quantum diffusion in two dimensions},\ }\href
  {https://doi.org/10.1103/PhysRevLett.42.673} {\bibfield  {journal} {\bibinfo
  {journal} {Phys. Rev. Lett.}\ }\textbf {\bibinfo {volume} {42}},\ \bibinfo
  {pages} {673} (\bibinfo {year} {1979})}\BibitemShut {NoStop}%
\bibitem [{\citenamefont {Wegner}(1989)}]{wegner1989four}%
  \BibitemOpen
  \bibfield  {author} {\bibinfo {author} {\bibfnamefont {F.}~\bibnamefont
  {Wegner}},\ }\bibfield  {title} {\bibinfo {title} {Four-loop-order
  $\beta$-function of nonlinear $\sigma$-models in symmetric spaces},\
  }\href@noop {} {\bibfield  {journal} {\bibinfo  {journal} {Nuclear Physics
  B}\ }\textbf {\bibinfo {volume} {316}},\ \bibinfo {pages} {663} (\bibinfo
  {year} {1989})}\BibitemShut {NoStop}%
\bibitem [{\citenamefont {Schreiber}\ and\ \citenamefont
  {Grussbach}(1996)}]{schreiber1996dimensionality}%
  \BibitemOpen
  \bibfield  {author} {\bibinfo {author} {\bibfnamefont {M.}~\bibnamefont
  {Schreiber}}\ and\ \bibinfo {author} {\bibfnamefont {H.}~\bibnamefont
  {Grussbach}},\ }\bibfield  {title} {\bibinfo {title} {Dimensionality
  dependence of the metal-insulator transition in the {A}nderson model of
  localization},\ }\href {https://doi.org/10.1103/PhysRevLett.76.1687}
  {\bibfield  {journal} {\bibinfo  {journal} {Phys. Rev. Lett.}\ }\textbf
  {\bibinfo {volume} {76}},\ \bibinfo {pages} {1687} (\bibinfo {year}
  {1996})}\BibitemShut {NoStop}%
\bibitem [{\citenamefont {Zel'dovich}(1964)}]{zel1964observations}%
  \BibitemOpen
  \bibfield  {author} {\bibinfo {author} {\bibfnamefont {Y.~B.}\ \bibnamefont
  {Zel'dovich}},\ }\bibfield  {title} {\bibinfo {title} {Observations in a
  universe homogeneous in the mean},\ }\href@noop {} {\bibfield  {journal}
  {\bibinfo  {journal} {Soviet Astronomy, Vol. 8, p. 13}\ }\textbf {\bibinfo
  {volume} {8}},\ \bibinfo {pages} {13} (\bibinfo {year} {1964})}\BibitemShut
  {NoStop}%
\bibitem [{\citenamefont {Sokoloff}\ and\ \citenamefont
  {Illarionov}(2015)}]{sokoloff2015intermittency}%
  \BibitemOpen
  \bibfield  {author} {\bibinfo {author} {\bibfnamefont {D.}~\bibnamefont
  {Sokoloff}}\ and\ \bibinfo {author} {\bibfnamefont {E.}~\bibnamefont
  {Illarionov}},\ }\bibfield  {title} {\bibinfo {title} {Intermittency and
  random matrices},\ }\href {https://doi.org/10.1017/S0022377815000458}
  {\bibfield  {journal} {\bibinfo  {journal} {Journal of Plasma Physics}\
  }\textbf {\bibinfo {volume} {81}},\ \bibinfo {pages} {395810402} (\bibinfo
  {year} {2015})}\BibitemShut {NoStop}%
\bibitem [{\citenamefont {Erd\H{o}s}\ and\ \citenamefont
  {Tur\'an}(1948)}]{ErdosTuran1948}%
  \BibitemOpen
  \bibfield  {author} {\bibinfo {author} {\bibfnamefont {P.}~\bibnamefont
  {Erd\H{o}s}}\ and\ \bibinfo {author} {\bibfnamefont {P.}~\bibnamefont
  {Tur\'an}},\ }\bibfield  {title} {\bibinfo {title} {On a problem in the
  theory of uniform distribution. i.},\ }\href@noop {} {\bibfield  {journal}
  {\bibinfo  {journal} {Nederl. Akad. Wetensch.}\ }\textbf {\bibinfo {volume}
  {51}},\ \bibinfo {pages} {1146} (\bibinfo {year} {1948})}\BibitemShut
  {NoStop}%
\bibitem [{\citenamefont {Baker}(1981)}]{Baker1981}%
  \BibitemOpen
  \bibfield  {author} {\bibinfo {author} {\bibfnamefont {R.~C.}\ \bibnamefont
  {Baker}},\ }\bibfield  {title} {\bibinfo {title} {Metric number theory and
  the large sieve},\ }\href {https://doi.org/10.1112/jlms/s2-24.1.34}
  {\bibfield  {journal} {\bibinfo  {journal} {J. London Math. Soc. (2)}\
  }\textbf {\bibinfo {volume} {24}},\ \bibinfo {pages} {34} (\bibinfo {year}
  {1981})}\BibitemShut {NoStop}%
\bibitem [{\citenamefont {Nevanlinna}(1940)}]{Nevanlinna1940}%
  \BibitemOpen
  \bibfield  {author} {\bibinfo {author} {\bibfnamefont {R.}~\bibnamefont
  {Nevanlinna}},\ }\bibfield  {title} {\bibinfo {title} {Ein {S}atz \"uber
  offene {R}iemannsche {F}l\"achen},\ }\href@noop {} {\bibfield  {journal}
  {\bibinfo  {journal} {Ann. Acad. Sci. Fennicae (A)}\ }\textbf {\bibinfo
  {volume} {54}},\ \bibinfo {pages} {18} (\bibinfo {year} {1940})}\BibitemShut
  {NoStop}%
\bibitem [{\citenamefont {Lyons}\ and\ \citenamefont
  {Sullivan}(1984)}]{LyonsSullivan1984}%
  \BibitemOpen
  \bibfield  {author} {\bibinfo {author} {\bibfnamefont {T.}~\bibnamefont
  {Lyons}}\ and\ \bibinfo {author} {\bibfnamefont {D.}~\bibnamefont
  {Sullivan}},\ }\bibfield  {title} {\bibinfo {title} {Function theory, random
  paths and covering spaces},\ }\href
  {http://projecteuclid.org/euclid.jdg/1214438681} {\bibfield  {journal}
  {\bibinfo  {journal} {J. Differential Geom.}\ }\textbf {\bibinfo {volume}
  {19}},\ \bibinfo {pages} {299} (\bibinfo {year} {1984})}\BibitemShut
  {NoStop}%
\bibitem [{\citenamefont {Grigor'yan}(1985)}]{Grigoryan1985existence}%
  \BibitemOpen
  \bibfield  {author} {\bibinfo {author} {\bibfnamefont {A.~A.}\ \bibnamefont
  {Grigor'yan}},\ }\bibfield  {title} {\bibinfo {title} {The existence of
  positive fundamental solutions of the {L}aplace equation on {R}iemannian
  manifolds},\ }\href@noop {} {\bibfield  {journal} {\bibinfo  {journal} {Mat.
  Sb. (N.S.)}\ }\textbf {\bibinfo {volume} {128(170)}},\ \bibinfo {pages} {354}
  (\bibinfo {year} {1985})}\BibitemShut {NoStop}%
\bibitem [{\citenamefont {Parhizkar}\ and\ \citenamefont
  {Galitski}(2022)}]{moireG}%
  \BibitemOpen
  \bibfield  {author} {\bibinfo {author} {\bibfnamefont {A.}~\bibnamefont
  {Parhizkar}}\ and\ \bibinfo {author} {\bibfnamefont {V.}~\bibnamefont
  {Galitski}},\ }\bibfield  {title} {\bibinfo {title} {Strained bilayer
  graphene, emergent energy scales, and moir\'e gravity},\ }\href
  {https://doi.org/10.1103/PhysRevResearch.4.L022027} {\bibfield  {journal}
  {\bibinfo  {journal} {Phys. Rev. Res.}\ }\textbf {\bibinfo {volume} {4}},\
  \bibinfo {pages} {L022027} (\bibinfo {year} {2022})}\BibitemShut {NoStop}%
\bibitem [{\citenamefont {Littman}\ \emph {et~al.}(1963)\citenamefont
  {Littman}, \citenamefont {Stampacchia},\ and\ \citenamefont
  {Weinberger}}]{LSW1963}%
  \BibitemOpen
  \bibfield  {author} {\bibinfo {author} {\bibfnamefont {W.}~\bibnamefont
  {Littman}}, \bibinfo {author} {\bibfnamefont {G.}~\bibnamefont
  {Stampacchia}},\ and\ \bibinfo {author} {\bibfnamefont {H.~F.}\ \bibnamefont
  {Weinberger}},\ }\bibfield  {title} {\bibinfo {title} {Regular points for
  elliptic equations with discontinuous coefficients},\ }\href@noop {}
  {\bibfield  {journal} {\bibinfo  {journal} {Ann. Scuola Norm. Sup. Pisa Cl.
  Sci. (3)}\ }\textbf {\bibinfo {volume} {17}},\ \bibinfo {pages} {43}
  (\bibinfo {year} {1963})}\BibitemShut {NoStop}%
\bibitem [{\citenamefont {Aistleitner}(2014)}]{Aistleitner2014}%
  \BibitemOpen
  \bibfield  {author} {\bibinfo {author} {\bibfnamefont {C.}~\bibnamefont
  {Aistleitner}},\ }\bibfield  {title} {\bibinfo {title} {Metric number theory,
  lacunary series and systems of dilated functions},\ }in\ \href
  {https://doi.org/10.1515/9783110317930.1} {\emph {\bibinfo {booktitle}
  {Uniform distribution and quasi-{M}onte {C}arlo methods}}},\ \bibinfo
  {series} {Radon Ser. Comput. Appl. Math.}, Vol.~\bibinfo {volume} {15}\
  (\bibinfo  {publisher} {De Gruyter, Berlin},\ \bibinfo {year} {2014})\ pp.\
  \bibinfo {pages} {1--16}\BibitemShut {NoStop}%
\bibitem [{\citenamefont {Bilyk}(2014)}]{Bilyk2014}%
  \BibitemOpen
  \bibfield  {author} {\bibinfo {author} {\bibfnamefont {D.}~\bibnamefont
  {Bilyk}},\ }\bibfield  {title} {\bibinfo {title} {Discrepancy theory and
  harmonic analysis},\ }in\ \href {https://doi.org/10.1515/9783110317930.45}
  {\emph {\bibinfo {booktitle} {Uniform distribution and quasi-{M}onte {C}arlo
  methods}}},\ \bibinfo {series} {Radon Ser. Comput. Appl. Math.},
  Vol.~\bibinfo {volume} {15}\ (\bibinfo  {publisher} {De Gruyter, Berlin},\
  \bibinfo {year} {2014})\ pp.\ \bibinfo {pages} {45--61}\BibitemShut {NoStop}%
\bibitem [{\citenamefont {Rivat}\ and\ \citenamefont
  {Tenenbaum}(2005)}]{RivatTenenbaum2005}%
  \BibitemOpen
  \bibfield  {author} {\bibinfo {author} {\bibfnamefont {J.}~\bibnamefont
  {Rivat}}\ and\ \bibinfo {author} {\bibfnamefont {G.}~\bibnamefont
  {Tenenbaum}},\ }\bibfield  {title} {\bibinfo {title} {Constantes
  d'{E}rd{\H{o}}s-{T}ur\'an},\ }\href
  {https://doi.org/10.1007/s11139-005-0829-1} {\bibfield  {journal} {\bibinfo
  {journal} {Ramanujan J.}\ }\textbf {\bibinfo {volume} {9}},\ \bibinfo {pages}
  {111} (\bibinfo {year} {2005})}\BibitemShut {NoStop}%
\end{thebibliography}%


%apsrev4-2.bst 2019-01-14 (MD) hand-edited version of apsrev4-1.bst
%Control: key (0)
%Control: author (8) initials jnrlst
%Control: editor formatted (1) identically to author
%Control: production of article title (0) allowed
%Control: page (0) single
%Control: year (1) truncated
%Control: production of eprint (0) enabled
%

\end{document}

% --- supplement: supplement.tex ---

\title{Geometric Delocalization in Two Dimensions\\ Supplemental Material}

\author{Laura Shou}
    \affiliation{Joint Quantum Institute, Department of Physics, University of Maryland, College Park 20742}
    
\author{Alireza Parhizkar}
    \affiliation{Joint Quantum Institute, Department of Physics, University of Maryland, College Park 20742}

\author{Victor Galitski}
    \affiliation{Joint Quantum Institute, Department of Physics, University of Maryland, College Park 20742}

\maketitle
\tableofcontents

\section{Proof of the transience condition in Claim 1}

In this section we prove Claim 1 of the main text, which was stated as follows:
\begin{claim}\label{claim}
For a manifold with smooth metric $g_{ab}=\begin{bmatrix}A^2(r,\theta)&B(r,\theta)\\B(r,\theta)&C^2(r,\theta)\end{bmatrix}$ in polar-like coordinates with $g(r,\theta)>0$ for all $r>0$, the condition
\begin{align}\label{eqn:transience-cond}
\operatorname{meas}\left\{\theta\in[0,2\pi):\int_1^\infty\frac{A^2(r,\theta)}{\sqrt{g(r,\theta)}}\,dr<\infty\right\}>0
\end{align}
implies transience.
\end{claim}

In order to prove this, we will use the definition of the \emph{capacity} of a pair of sets $(K,\Omega)$, where $K$ is a compact set in an open set $\Omega$, both in the manifold $M$ which has metric $g$. Note that only here in the proof subsection, we are designating the metric with $g$ and the metric determinant with $\det g$ to avoid index confusion later on.
We provide a brief definition of capacity here, and refer the reader to Refs.~\cite[\S4.3]{bull} and \cite{LSW1963} for further details and references.
The capacity is defined as
\begin{align}\label{eqn:capacity}
\operatorname{Cap}(K,\Omega):=\inf_{\phi\in\mathcal L(K,\Omega)}\int_\Omega|\nabla \phi|_g^2\,d\mathrm{Vol}_g,
\end{align}
where the infimum is over all locally Lipschitz function $\phi$ on $M$ with $0\le\phi\le1$ and $\phi|_K=1$ and $\phi|_{\bar\Omega^c}=0$, and where $|\nabla \phi|_g^2$ is taken with respect to the metric $g$ and $d\mathrm{Vol}_g$ is the volume element.
The space of functions $\mathcal{L}(K,\Omega)$ can have additional restrictions, such as smoothness, without changing the capacity \cite[\S4.3]{bull}.
The infimum in Eq.~\eqref{eqn:capacity} is obtained by a harmonic function $u$, satisfying
\begin{align}\label{eqn:dirichlet}
\begin{cases}\Delta u=0\\u|_{\partial\Omega}=0\\u|_{\partial K}=1\end{cases},
\end{align}
in which case
\begin{align}\label{eqn:cap-flux}
\operatorname{Cap}(K,\Omega)=\int_\Omega|\nabla u|_g^2\,d\mathrm{Vol}_g. 
\end{align}
The capacity can sometimes be interpreted as a conductivity between $\partial K$ and $\partial\Omega$, since by Green's formula, Eq.~\eqref{eqn:cap-flux} can be written as a flux through the boundary $\partial K$ or $\partial\Omega$.

When $\Omega=M$, we can write $\operatorname{Cap}(K)$ for $\operatorname{Cap}(K,\Omega)$, and call $\operatorname{Cap}(K)$ just the capacity of the set $K$.
Considering these capacities is useful in our case due to the following theorem; for references and proof see e.g. Theorem 5.1 in the overview \cite{bull}.
\begin{thm}\label{thm:capacity-bm}
Brownian motion on a Riemannian manifold $M$ is transient if and only if the capacity of some compact set is positive.
\end{thm}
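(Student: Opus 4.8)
The plan is to reduce transience to the non-vanishing of the equilibrium potential of a fixed compact set, which is the classical route behind Theorem~5.1 of \cite{bull}. Fix any compact $K\subset M$ with smooth boundary and nonempty interior, and an exhaustion $M=\bigcup_j\Omega_j$ by relatively compact open sets with smooth boundary, $K\subset\Omega_1$, $\overline{\Omega_j}\subset\Omega_{j+1}$. On each annulus $\Omega_j\setminus K$ solve the Dirichlet problem \eqref{eqn:dirichlet} to get a harmonic $u_j$ with $u_j|_{\partial K}=1$, $u_j|_{\partial\Omega_j}=0$, extended by $u_j\equiv1$ on $K$. The maximum principle gives $u_j\le u_{j+1}\le1$, so $u_j\uparrow u$, a harmonic function on $M\setminus K$ with $0\le u\le1$ and $u|_{\partial K}=1$ --- the equilibrium potential of $K$.

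First I would identify $\operatorname{Cap}(K)$ with $\int_{M\setminus K}|\nabla u|_g^2\,d\mathrm{Vol}_g$. The nesting $\mathcal L(K,\Omega_j)\subset\mathcal L(K,\Omega_{j+1})\subset\mathcal L(K,M)$ makes $\operatorname{Cap}(K,\Omega_j)$ nonincreasing, with $\operatorname{Cap}(K,\Omega_j)\ge\operatorname{Cap}(K)$. Since $u_j$ minimizes the Dirichlet integral for $\operatorname{Cap}(K,\Omega_j)$, Green's identity rewrites it as the boundary flux $-\int_{\partial K}\partial_\nu u_j\,dS$; boundary (Schauder) elliptic estimates give $u_j\to u$ in $C^1$ up to the smooth surface $\partial K$, so these fluxes converge, and --- using $\int_{M\setminus K}|\nabla u|_g^2\le\liminf_j\operatorname{Cap}(K,\Omega_j)<\infty$ to validate Green's identity for $u$ --- the limit equals $\int_{M\setminus K}|\nabla u|_g^2$. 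Since $u$ is itself admissible for $\operatorname{Cap}(K)$, this common value is $\operatorname{Cap}(K)$.

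Next comes the probabilistic link: $u_j(x)=\P_x(\tau_K<\tau_{M\setminus\Omega_j})$, so $u(x)=\P_x(\tau_K<\zeta)$ with $\zeta$ the lifetime of the minimal Brownian motion. Then $\operatorname{Cap}(K)=0\iff\int|\nabla u|_g^2=0\iff u\equiv\mathrm{const}\iff u\equiv1$ (its value on $\partial K$) $\iff\P_x(\tau_K<\zeta)=1$ for every $x$. If this last holds then, $M$ being noncompact, Brownian motion a.s.\ exits the compact set $K$ and, by the strong Markov property applied repeatedly, re-enters it infinitely often, i.e.\ $M$ is recurrent; conversely recurrence means Brownian motion a.s.\ visits $\mathrm{int}\,K$ (and $\zeta=\infty$), so $u\equiv1$ and $\operatorname{Cap}(K)=0$. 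Hence $\operatorname{Cap}(K)>0$ is equivalent to transience, and since this equivalence holds for every compact $K$ with nonempty interior, positivity of $\operatorname{Cap}$ for one such set is equivalent to positivity for all of them --- the ``some compact set'' formulation.

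The delicate points --- where in the body of the paper I would simply defer to the cited literature --- are the passage $\operatorname{Cap}(K,\Omega_j)\to\operatorname{Cap}(K)$ for a general exhaustion (the one genuinely analytic step: that the monotone limit $u$ has finite Dirichlet energy equal to $\lim_j\operatorname{Cap}(K,\Omega_j)$ and is admissible for $\operatorname{Cap}(K)$), and the treatment of stochastically incomplete $M$, where $u$ records Brownian motion only up to its explosion time, so transience must be read as finiteness of the \emph{minimal} Green function $\int_0^\infty p(x,y;t)\,dt$ and the hitting-probability identity together with the ``return'' argument must be run for the minimal diffusion. Both are classical --- this is Theorem~5.1 of \cite{bull} --- so the sketch above is only meant to exhibit the mechanism.
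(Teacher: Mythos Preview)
The paper does not prove this theorem: it is simply stated and attributed to the literature, with the line ``for references and proof see e.g.\ Theorem~5.1 in the overview \cite{bull}.'' So there is no in-paper argument to compare your proposal against.

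Your sketch is a correct outline of the classical equilibrium-potential proof --- exhaustion $\Omega_j\uparrow M$, harmonic $u_j$ on $\Omega_j\setminus K$ increasing to the equilibrium potential $u$, identification of $\operatorname{Cap}(K)$ with the Dirichlet energy of $u$ via boundary flux, and the probabilistic reading $u(x)=\P_x(\tau_K<\zeta)$ followed by the strong Markov argument. This is precisely the mechanism behind Theorem~5.1 of \cite{bull}, as you yourself note in the final paragraph. The caveats you flag (the convergence $\operatorname{Cap}(K,\Omega_j)\to\operatorname{Cap}(K)$ and the stochastically incomplete case) are indeed the places where one must appeal to standard analytic estimates, and deferring them to the cited source is exactly what the paper does wholesale. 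In short: your proposal supplies more than the paper itself does, and what you supply is correct.
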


Using this theorem, our goal will be to show that the capacity of a ball is positive when the condition Eq.~\eqref{eqn:transience-cond} of Claim~\ref{claim} holds.
We will follow the general proof method given in Ref.~\cite[\S12]{bull} [which proves Eq.~\mainmeas\ of the main text], with some changes since the metric is not diagonal, and the coordinate radial distance is different than the geodesic or physical distance. 

\begin{proof}[Proof of Claim~\ref{claim}]
Our goal is to lower bound the capacity $\operatorname{Cap}(B_{r_0},B_R)$. In order to do this, we split the ball $B_R$ into many wedges $\{(r,\theta):r\le R,\theta\in\omega_i\}$, and consider the capacity on each small wedge. We will estimate the metric $g$ on each wedge by a rotationally symmetric one $\tilde{g}_i(r)$, since the the metric cannot vary much if $\omega_i\subseteq\R/(2\pi\Z)$ is small. In the rotationally symmetric case, we can solve the Dirichlet problem Eq.~\eqref{eqn:dirichlet} explicitly to estimate the capacity.

For an arc $\omega\subseteq\R/(2\pi\Z)$, let $N^\omega$ be the cone $\{(r,\theta):\theta\in\omega,r>0\}$. 
Consider a smooth rotationally symmetric metric
\begin{align}
\tilde g=\begin{bmatrix}\tilde{A}^2(r)&\tilde B(r)\\\tilde B(r)&\tilde C^2(r)\end{bmatrix}
\end{align}
on the cone $B_R^\omega:=N^\omega\cap B_R$, such that
\begin{align}\label{eqn:metric-r-approx}
 \det g(r,\theta)\ge \det\tilde g(r)>0,\quad\text{and} \quad A(r,\theta)\le \tilde A(r),
\end{align}
for $r_0\le r\le R$ and $\theta\in\omega$.

Before starting the capacity estimates, we first need to bound the gradient norm in terms of its radial part so we can compare to the capacity for a purely radial metric. For a metric $g=\begin{bmatrix}g_{rr}&g_{r\theta}\\g_{r\theta}&g_{\theta\theta}\end{bmatrix}$, we have
\begin{align*}
|\nabla\phi|_g^2&=(\partial_r\phi,\partial_\theta\phi)\cdot g^{-1}(\partial_r\phi,\partial_\theta\phi)\\
&=\frac{1}{\det g}\left[g_{\theta\theta}(\partial_r\phi)^2+g_{rr}(\partial_\theta\phi)^2-2g_{r\theta}(\partial_r\phi\partial_\theta\phi)\right]\\
&\ge\frac{1}{\det g}\left[g_{\theta\theta}(\partial_r\phi)^2+g_{rr}(\partial_\theta\phi)^2-2\left(\frac{\big(\sqrt{g_{rr}}\partial_\theta\phi\big)^2+\big(\frac{g_{r\theta}}{\sqrt{g_{rr}}}\partial_r\phi\big)^2}{2}\right)\right]\\
&\ge\frac{1}{\det g}\left(g_{\theta\theta}-\frac{g_{r\theta}^2}{g_{rr}}\right)(\partial_r\phi)^2
=\frac{1}{g_{rr}}(\partial_r\phi)^2,\numberthis\label{eqn:grad-radial}
\end{align*}
where we used the inequality $ab\le (a^2+b^2)/2$ applied to appropriate $a,b$ in the third line.
Applying Eq.~\eqref{eqn:grad-radial} to the metric $g=\begin{bmatrix}A^2(r,\theta)&B(r,\theta)\\B(r,\theta)&C^2(r,\theta)\end{bmatrix}$, then 
\begin{align*}
\inf_{\phi\in\mathcal L(B_{r_0},B_R)}\int_{B_R^{\omega}}|\nabla\phi|_g^2\,d\mathrm{Vol}_g &\ge \inf_{\phi\in\mathcal L(B_{r_0},B_R)}\int_\omega\int_{r_0}^{R}\frac{1}{A^2(r,\theta)}(\partial_r\phi(r,\theta))^2\sqrt{\det g(r,\theta)}\,dr\,d\theta\\
&\ge \inf_{\phi\in\mathcal L(B_{r_0},B_R)}\int_\omega\int_{r_0}^{R}\frac{1}{\tilde A^2(r)}(\partial_r\phi(r,\theta))^2\sqrt{\det\tilde g(r)}\,dr\,d\theta,\numberthis\label{eqn:nonradial-mid}
\end{align*}
by the definition of the radial metric $\tilde g$.
These two approximations, the radial metric $\tilde g$ and the bound Eq.~\eqref{eqn:grad-radial}, thus remove $\theta$-dependence from the problem.
To handle Eq.~\eqref{eqn:nonradial-mid}, define the functional
\begin{align}
\mathcal E[\phi]&:=\int_\omega\int_{r_0}^R\frac{1}{\tilde A^2(r)}(\partial_r\phi(r,\theta))^2\sqrt{\det\tilde g(r)}\,dr\,d\theta
\end{align}
over $\phi\in\mathcal L(B_{r_0},B_R)$, so $\phi|_{B^\omega_{r_0}}=1$ and $\phi|_{B^\omega_R}=0$. The Euler--Lagrange equation,
which is determined by evaluating $\lim_{\epsilon\to0}\frac{1}{\epsilon}(\mathcal E[\phi+\epsilon v]-\mathcal E[\phi])$ for any Lipschitz $v$ that is zero on the ball boundaries, 
is 
\begin{align}\label{eqn:el}
\partial_r\left(\frac{1}{\tilde A^2(r)}\sqrt{\det\tilde g(r)}\,\partial_r\phi(r,\theta)\right)=0, 
\end{align} 
or equivalently,
\begin{align}
\partial_r^2\phi+\left(\frac{\partial_r\sqrt{\det\tilde g(r)}}{\sqrt{\det\tilde g(r)}}-\frac{\partial_r\tilde A^2(r)}{A^2(r)}\right)\partial_r\phi=0,
\end{align}
on the region $r_0\le r\le R$ and $\theta\in\omega$.
Eq.~\eqref{eqn:el} implies that
\begin{align*}
\partial_r\phi(r,\theta)&=\frac{\tilde A^2(r)}{\sqrt{\det\tilde g(r)}}y(\theta),\quad\text{and so}\;\phi(R,\theta)-\phi(r,\theta)=y(\theta)\int_r^R\frac{\tilde A^2(\rho)}{\sqrt{\det\tilde g(\rho)}}\,d\rho,
\end{align*}
for some (possibly $\theta$-dependent) constant $y(\theta)$.
Imposing the boundary conditions $\phi|_{\partial B_R\cap N^\omega}=0$ and $\phi|_{\partial B_{r_0}\cap N^\omega}=1$, i.e. $\phi(R,\theta)=0$ and $\phi(r_0,\theta)=1$ for $\theta\in\omega$, we obtain the solution to Eq.~\eqref{eqn:el} is
\begin{align}
\phi_0(r,\theta)=\phi_0(r)&=a\int_r^R\frac{\tilde A^2(\rho)}{\sqrt{\det\tilde g(\rho)}}\,d\rho,
\end{align}
where $a=-y(\theta)=\left(\int_{r_0}^R\frac{\tilde A^2(\rho)}{\sqrt{\det\tilde g(\rho)}}\,d\rho\right)^{-1}$.
We also have by direct evaluation,
\begin{align}
\mathcal E[\phi_0]&=|\omega|a=|\omega|\left(\int_{r_0}^R\frac{\tilde A^2(\rho)}{\sqrt{\det\tilde g(\rho)}}\,d\rho\right)^{-1}.
\end{align}
Returning to Eq.~\eqref{eqn:nonradial-mid}, we see that
\begin{align}
\inf_{\phi\in\mathcal L(B_{r_0},B_R)}\int_{B_R^{\omega}}|\nabla\phi|_g^2\,d\mathrm{Vol}_g&\ge |\omega|\left(\int_{r_0}^R\frac{\tilde A^2(\rho)}{\sqrt{\det\tilde g(\rho)}}\,d\rho\right)^{-1}.
\end{align}

If we have a finite set of disjoint arcs $\omega_i\subset\R/(2\pi\Z)$, and metrics $\tilde{g}_i$ satisfying Eq.~\eqref{eqn:metric-r-approx} in $[r_0,R]\times\omega_i$, then
\begin{align*}
\operatorname{Cap}(B_{r_0},B_R)&=\inf_{\phi\in\mathcal L(B_r,B_R)}\int_{B_R}|\nabla \phi|_g^2\,d\mathrm{Vol}_g\\
&\ge\sum_i\inf_{\phi\in\mathcal L(B_r,B_R)}\int_{B_R^{\omega_i}}|\nabla \phi|_g^2\,d\mathrm{Vol}_g\\
&\ge \sum_i|\omega_i|\left(\int_{r_0}^R\frac{\tilde A_i^2(\rho)}{\sqrt{\det\tilde g_i(\rho)}}\,d\rho\right)^{-1}.\numberthis
\end{align*}
By partitioning $[0,2\pi)$ into small arcs $\omega_i$, and taking $\tilde{A}_i(r)$ and $\det\tilde{g}_i(r)$ close to $A(r,\theta)$ and $\det g(r,\theta)$, uniformly in $\theta\in\omega_i$ and $r\in[r_0,R]$, the above sum approximates the integral $\int_0^{2\pi}\left(\int_{r_0}^R\frac{A^2(r,\theta)}{\sqrt{\det g(r,\theta)}}\,d\rho\right)^{-1}\,d\theta$. We thus obtain
\begin{align}
\operatorname{Cap}(B_{r_0},B_R)&\ge\int_0^{2\pi}\left(\int_{r_0}^R\frac{A^2(r,\theta)}{\sqrt{\det g(r,\theta)}}\,d\rho\right)^{-1}\,d\theta.
\end{align}
Taking $R\to\infty$ and applying Theorem~\ref{thm:capacity-bm} yields the claim.
\end{proof}

\section{Transience details for the tablecloth manifold}

In this section, we give precise conditions for the bump functions $\chi_n$ used in the main text, and then give the details showing that the tablecloth membrane given in Eq.~\maintable\ of the main text is transient via the condition in Eq.~\eqref{eqn:transience-cond}.

First, let $\{\eta_L(r)\}_{L\in\N}$ be a collection of smooth bump functions, such that $\eta_L\equiv 1$ on $[1/4,L-1/4]$ and $\eta_L\equiv0$ outside $[1/8,L-1/8]$, and $\sup_{L\in\N}\|\partial_r\eta_L\|_\infty<\infty$. Then set $\chi_n(r):=\eta_{n}(r-n(n-1)/2)$.
One can take for example
\begin{align*}
\eta_L(r)&=\begin{cases}
0,&r\le\frac{1}{8}\text{ or }r\ge L-\frac{1}{8}\\
\exp\left({1-\frac{1}{1-64(r-1/4)^2}}\right),&\frac{1}{8}\le r\le\frac{1}{4}\\
1,&\frac{1}{4}\le r\le L-\frac{1}{4}\\
\exp\left({1-\frac{1}{1-64(r-L+1/4)^2}}\right),&L-\frac{1}{4}\le r\le L-\frac{1}{8}
\end{cases}.
\end{align*}

Recall that Eq.~\maintable\ of the main text defines a tablecloth membrane via the height function,
\begin{align}\label{eqn:h}
h(r,\theta)&=\sum_{n=1}^\infty \chi_n(r)\cos(n^4\theta) \, ,
\end{align}
where $\chi_n(r)$ satisfy the general requirements described above.
Now for transience, recall that for a metric $ds^2=A^2(r,\theta)\,dr^2+2B(r,\theta)\,dr\,d\theta+C^2(r,\theta)\,d\theta^2$ with determinant denoted by $g(r,\theta)$, that we want to show
\begin{align}\label{eqn:transience-cond2}
\int_1^\infty \frac{A^2(r,\theta)}{\sqrt{g(r,\theta)}}\,dr<\infty,
\end{align}
for a positive measure set of $\theta$.
From Eq.~\maingI\ of the main text, the induced metric of the tablecloth membrane in Eq.~\eqref{eqn:h} is, at $(r,\theta)$ with $n(n-1)/2<r<n(n+1)/2$,
\begin{align}
g_{ab}&=\begin{bmatrix}
1+ \chi_n'(r)^2\cos^2(n^4\theta)& -n^4\chi_n(r)\chi_n'(r)\cos(n^4\theta)\sin(n^4\theta)\\
-n^4\chi_n(r)\chi_n'(r)\cos(n^4\theta)\sin(n^4\theta)&r^2+n^8\chi_n(r)^2\sin^2(n^4\theta)
\end{bmatrix}.
\end{align}
For $\frac{n(n-1)}{2}+\frac{1}{4}<r<\frac{n(n+1)}{2}-\frac{1}{4}$, the metric simplifies because $\chi_n\equiv1$ there, leading to the much simpler formula
\begin{align}
g_{ab}&=\begin{bmatrix}
1& 0\\
0&r^2+n^8\sin^2(n^4\theta)
\end{bmatrix}.
\end{align}
Splitting up the integral in Eq.~\eqref{eqn:transience-cond2} into regions where $\chi_n\equiv1$ or not, and using that ${1+|\chi_n'(r)|^2}\le c'$ for some constant $c'$ and all $n$ and $r$ (by construction of $\chi_n$), we can estimate
\begin{align*}
\int_1^\infty\frac{A^2(r,\theta)}{\sqrt{g(r,\theta)}}\,dr&\le \sum_{n=1}^\infty\Bigg(\int_{\frac{n(n-1)}{2}+\frac{1}{4}}^{\frac{n(n+1)}{2}-\frac{1}{4}}\frac{1}{\sqrt{r^2+n^8\sin^2(n^4\theta)}}\,dr\\
&\hspace{4cm}+\int_{\frac{n(n+1)}{2}-\frac{1}{4}}^{\frac{n(n+1)}{2}+\frac{1}{4}}\frac{1+\chi_n'(r)^2\cos^2(n^4\theta)}{\sqrt{r^2+n^8\chi_n(r)^2\sin^2(n^4\theta)+r^2(\chi_n'(r))^2\cos^2(n^4\theta)}}\,dr\Bigg)\\
&\le \sum_{n=1}^\infty\Bigg(\frac{n}{\sqrt{\frac{n^2(n-1)^2}{4}+n^8\sin^2(n^4\theta)}}+\frac{c'}{n(n+1)-\frac{1}{2}}\Bigg),\numberthis\label{eqn:integral-series}
\end{align*}
where to obtain the second term in the last line we used that the corresponding integrand is bounded from above by $\frac{c'}{r}\le \frac{c'}{n(n-1)/2+1/4}$.
This leads to the sum $\sum_{n=1}^\infty\frac{c'}{n(n+1)-1/2}$, which is $\le\sum_{n=1}^\infty\frac{c'}{n^2}<\infty$. 
Considering then the first term in Eq.~\eqref{eqn:integral-series}, we need to show that for a positive measure set of $\theta$, that
\begin{align}\label{eqn:series}
\sum_{n=1}^\infty\frac{1}{n\sqrt{1+n^4\sin^2(n^4\theta)}}<\infty.
\end{align}
The terms in the series tend to decay like $1/n^3$, except when $\sin^2(n^4\theta)$ is small, which is when $n^4\theta$ is close to a multiple of $\pi$. For these $n$ the terms behave like $1/n$.
We can split the series based on this into two parts; letting $\{\cdot\}$ denote the value mod 1 of a number, taken to be in $(-1/2,1/2]$ (essentially the fractional part of a number but using negative numbers for convenience), 
then fixing a small $\epsilon>0$, 
\begin{align*}
\sum_{n=1}^\infty\frac{1}{n\sqrt{1+n^4\sin^2(n^4\theta)}}&= \sum_{n=1}^\infty \frac{\mathbf{1}_{|\{n^4\theta/\pi\}|\ge n^{-(2-\epsilon)}}}{n\sqrt{1+n^4\sin^2(n^4\theta)}} +
 \sum_{n=1}^\infty \frac{\mathbf{1}_{|\{n^4\theta/\pi\}|<n^{-(2-\epsilon)}}}{n\sqrt{1+n^4\sin^2(n^4\theta)}}\\
 &\le \sum_{n=1}^\infty\frac{1}{n\sqrt{1+cn^{2\epsilon}}}+\sum_{n=1}^\infty\frac{1}{n}\mathbf{1}_{|\{n^4\theta/\pi\}|<n^{-(2-\epsilon)}},
 \numberthis\label{eqn:series-split}
\end{align*}
where for the first term we used that $\sin^2(x)\ge c|\{x/\pi\}|^2$. 
The first series in Eq.~\eqref{eqn:series-split} converges, so we just need to show convergence of the second term involving the harmonic series summed only over certain $n$.
This convergence is intuitively plausible, since $n^{-(2-\epsilon)}$ is shrinking rapidly, so the $n\in\N$ contributing to the sum must be fairly sparse.
By Weyl's equidistribution theorem, we know that $\{n^4x\}$ is equidistributed in $\R/\Z$ for almost every (a.e.) $x$, i.e.,
\begin{align*}
\frac{\#\{1\le n\le N:\{n^4 x\}\in[a,b]\}}{N}\xrightarrow{N\to\infty} \operatorname{length}([a,b]),
\end{align*}
for any interval $[a,b]\subseteq\R/\Z$.
This suggests that if the equidistribution is fast enough to handle the shrinking interval, the occurrence $|\{n^4 x\}|<n^{-(2-\epsilon)}$ should only happen approximately around a fraction maybe $2n^{-(2-\epsilon)}$ of the time, which should lead to a convergent sum. 
In order to prove the convergence, we can apply the quantitative estimate on the rate of equidistribution given by the Erd\H{o}s--Tur\'an inequality \cite{ErdosTuran1948}, which states that for any $N,m\in\N$ and $y_n\in[0,1)$ and $I\subseteq[0,1)$,
\begin{align}
\left|\#\{1\le n\le N:y_n\in I\}-N|I|\right|&\le \frac{3N}{m}+3\sum_{k=1}^m\frac{1}{k}\left|\sum_{n=1}^N e^{2\pi ik y_n}\right|.
\end{align}
(For further details, see e.g. Refs.~\cite{Aistleitner2014,Bilyk2014}, and for better constants, see Ref.~\cite{RivatTenenbaum2005}.)
Applying this with Carlson's theorem on pointwise a.e. convergence of Fourier series (or more precisely the maximal operator bound) to bound the exponential sum, Ref.~\cite{Baker1981} showed that for any strictly increasing sequence of natural numbers $a_1,a_2,\ldots$, there is a constant $C$ (which does not depend on $N$) so that for every $\delta>0$,
\begin{align}\label{eqn:baker}
\sup_{I\subseteq\R/\Z}\left|\#\{1\le n\le N:\{a_nx\}\in I\}-N|I|\right|&\le C N^{1/2}(\log N)^{3/2+\delta},
\end{align}
for a.e. $x$ with respect to the Lebesgue measure. 
In our case, we will take $a_n=n^4$ and $x=\theta/\pi$.
First, we rewrite the second sum in Eq.~\eqref{eqn:series-split} so that we can apply Eq.~\eqref{eqn:baker}. For an increasing function $f:\N\to\N$ with $f(1)=1$, we can group the terms in the series to write
\begin{align}\label{eqn:f-bound}
\sum_{n=1}^\infty\frac{1}{n}\mathbf{1}_{|\{n^4\theta/\pi\}|<n^{-(2-\epsilon)}}&\le \sum_{k=1}^\infty\frac{1}{f(k)} \#\{f(k)\le n\le f(k+1):|\{n^4 x\}|\le n^{-(2-\epsilon)}\}.
\end{align}
Because we have such fast decay $n^{-(2-\epsilon)}$, which causes the $n$ in the sum to be much sparser than we actually need, we do not need to be very careful with estimates. Applying Eq.~\eqref{eqn:baker}, we can simply estimate
\begin{align*}
\#\{f(k)\le n\le f(k+1):|\{n^4 x\}|\le n^{-(2-\epsilon)}\}&\le \#\{1\le n\le f(k+1):|\{n^4 x\}|\le f(k)^{-(2-\epsilon)}\}\\
&\le 2f(k)^{-(2-\epsilon)}f(k+1)+Cf(k+1)^{1/2}(\log f(k+1))^2.\numberthis
\end{align*}
Taking $f(k)=k^3$, combining this with Eq.~\eqref{eqn:f-bound} gives
\begin{align}
\sum_{n=1}^\infty\frac{1}{n}\mathbf{1}_{|\{n^4\theta/\pi\}|<n^{-(2-\epsilon)}}&\le C'\sum_{k=1}^\infty\left(\frac{1}{k^{6-3\epsilon}}+\frac{(\log k)^2}{k^{3/2}}\right)<\infty,
\end{align}
so that Eq.~\eqref{eqn:series-split} is finite for a.e. $\theta$. Thus Eq.~\eqref{eqn:integral-series} is finite for a.e. $\theta$, implying the membrane $h$ is transient by Claim~\ref{claim}. \qed

\section{Circuit board examples}

Recall the condition for transience of rotationally symmetric manifolds with metric of the form $ds^2=dr^2+f(r)^2\,d\theta^2$. This was stated in Eq.~\mainrot\ as the condition
\begin{align}\label{eqn:rot-transience-sup}
    \int_1^\infty\frac{1}{f(r)}\,dr<\infty.
\end{align}
Using this, we can quickly write down several examples of rotationally symmetric metrics $ds^2=dr^2+f(r)^2\,d\theta^2$ with transient Brownian motion and average zero curvature.
However, as shown in the main text, such manifolds \emph{cannot} be realized as membranes in flat 3D space.
However, they can still be realized using circuit boards \cite{kollar2019hyperbolic,boettcher2020quantum,blais2021circuit,lenggenhager2022simulating} by approximating the manifold by a graph, where the number of neighbors at each $r=R$ is given by $2\pi f(R)$. 

\vspace{2mm}
\noindent\textit{Example} 1 (Circuit example, non-decaying curvature).
Let $f(r)=r^{1+\varepsilon}(2+\cos r)+r$ for some $0<\varepsilon\le 1$, which by Eq.~\eqref{eqn:rot-transience-sup} ensures transience. The extra $+r$ at the end makes $f'(0)=1$, which gives some regularity of the manifold at $0$.
To check the average zero curvature condition, we calculate the Gaussian curvature to be 
\begin{align}\label{eqn:K-1}
K(r)&=-\frac{f''(r)}{f(r)}=\frac{-\varepsilon(1+\varepsilon)(2+\cos r)r^{\varepsilon-1}+2(1+\varepsilon)r^\varepsilon \sin r+r^{1+\varepsilon}\cos r}{r^{1+\varepsilon}(2+\cos r)+r}.
\end{align}
The behavior as $r\to\infty$ looks like $\frac{\cos r}{2+\cos(r)}+o(1)$ which oscillates and doesn't decay.
For ``average curvature zero'', we will show there is a sequence of $R_k\to\infty$ with
\begin{align}\label{eqn:intzero}
\int_{B(0,R_k)}K(r,\theta)\,d\mathrm{Vol}(r,\theta)=0.
\end{align} 
We can evaluate
\begin{align}
I(R):=\int_{B(0,R)}K(r)\,d\mathrm{Vol}(r,\theta)
=-2\pi\int_0^R\frac{f''(r)}{f(r)}f(r)\,dr
&=-2\pi [f'(R)-1]\\
&=-2\pi R^{\varepsilon}\left[(1+\varepsilon)\cos R-R\sin R+2\varepsilon+2\right].
\end{align}
The leading order term in $I(R)$ is $2\pi R^{1+\varepsilon}\sin R$, which is highly oscillatory and ensures the existence of the $R_k$ in Eq.~\eqref{eqn:intzero}.

We expect this manifold satisfies an alternative ``average zero curvature condition'',
\begin{align}\label{eqn:k-half}
\lim_{R\to\infty}\frac{\int_{B(0,R)}(K(r,\theta))_-\,d\mathrm{Vol}(r,\theta)}{\int_{B(0,R)}|K(r,\theta)|\,d\mathrm{Vol}(r,\theta)}
&=\lim_{R\to\infty}\frac{\int_{B(0,R)}(K(r,\theta))_+\,d\mathrm{Vol}(r,\theta)}{\int_{B(0,R)}|K(r,\theta)|\,d\mathrm{Vol}(r,\theta)},
\end{align}
where $(K(r,\theta))_-:=|K(r,\theta)|\mathbf{1}_{\{K(r,\theta)<0\}}$ and $(K(r,\theta))_+:=K(r,\theta)\mathbf{1}_{\{K(r,\theta)>0\}}$. 
In other words, this condition says the amount of positive and negative curvature each take up half the total curvature in the limit; or, the positive and negative curvature integrals have the same leading order behavior.
Note the condition in Eq.~\eqref{eqn:k-half} is different than just requiring the limit of the average curvature to be zero, which would be $\langle K\rangle\equiv \lim_{R\to\infty}\frac{1}{|B(0,R)|}\int_{B(0,R)}K(r)\,d\mathrm{Vol}(r,\theta)=0$. This latter condition can hold even with everywhere negative curvature that tends to zero quickly at infinity.

\vspace{2mm}
\noindent\textit{Example} 2 (Circuit example, decaying curvature).
Let $f(r)=r^{1+\varepsilon}+r\cos(r)$ for some $\varepsilon>0$.
Then $f(0)=0$, $f'(0)=1$, and
\begin{align*}
K(r)&=\frac{-\varepsilon(1+\varepsilon)r^{\varepsilon-1}+2\sin r+r\cos r}{r^{1+\varepsilon}+r\cos(r)},
\end{align*}
which tends to zero at infinity.
We have
\begin{align}
I(R) :=\int_{B(0,R)}K(r)\,d\mathrm{Vol}(r,\theta)
&=-2\pi [f'(R)-1]\\
&=-2\pi(1+\varepsilon)R^\varepsilon+2\pi R\sin R-2\pi \cos R+2\pi.
\end{align}
The leading order term is $2\pi R\sin R$ which is highly oscillatory, and ensures existence of the $R_k$ in Eq.~\eqref{eqn:intzero}.

\bibliography{brownian.bib}